\newtheorem{theorem}{Theorem}
\newtheorem{corollary}{Corollary}
\newtheorem{lemma}{Lemma}
\newtheorem{proposition}{Proposition}
\newcommand{\mb}{\mathbb}
\newcommand{\bl}{\boldsymbol}
\newcommand{\sgn}{\text{sign}}
\DeclareMathOperator{\arccot}{arccot}
\begin{document}

\title{\LARGE \bf Supermajority Sentiment Detection with External Influence in Large Social Networks}
\author{Tian Tong$^{1}$ and Rohit Negi$^{2}$
\thanks{*This research was partially supported by NSF awards CCF1422193 and CNS1218823.}
\thanks{Electrical and Computer Engineering Department, Carnegie Mellon University, {$^{1}$\tt\small ttong1@andrew.cmu.edu}, {$^{2}$\tt\small negi@ece.cmu.edu} 
}}

\maketitle
\thispagestyle{empty}
\pagestyle{empty}

\begin{abstract}
In a large social network whose members harbor binary sentiments towards an issue, we investigate the asymptotic accuracy of sentiment detection. We model the user sentiments by an Ising Markov random field model and allow the user sentiments to be biased by an external influence. We consider a general supermajority sentiment detection problem and show that the detection accuracy is affected by the network structure, its parameters, as well as the external influence level.  
\end{abstract}

\IEEEpeerreviewmaketitle

\section{Introduction}
\label{sec:introduction}
Social networks have become important sources for politicians, sociologists, and financial analysts alike to detect, predict and shape the sentiments of people. A practical example can be a polling firm that predicts the outcome of an election by detecting sentiments from Tweets, where the underlying network can be Twitter follower/followee relations. Sociology has long been fascinated by the power of social networks, and asked questions about how social networks can influence peoples' sentiments \cite{kouloumpis2011twitter,pang2008opinion}. A variety of empirical work has shown the effect of networks to lead sentiments to the two different types of steady states: consensus or dissent. At the same time, several models have been proposed to theoretically explain why social networks can lead to dramatically different behaviors. 

We model the social network as an Ising Markov random field, with members as vertices, relation between members as edges, and sentiments as random variables taking the values $-1$ or $+1$. We are interested in large social networks, which we model by considering a sequence of graphs whose size grows to infinity. This will allow us  to describe macro-behavior in the asymptotic limit of large social networks. Ising models have been extensively studied in statistical physics \cite{lee1952statistical,baxter2007exactly,friedli_velenik_2017} as simple models that capture the essence of atomic interactions where global properties emerge based on local interactions of atoms with their neighbors. They have been used to explain the phase transition phenomena in materials such as iron, which turns into a permanent magnet at suitably low temperatures (ferromagnetic phase) but is unable to do so at higher temperatures (paramagnetic phase). 

Previously, we had considered the problem of {\em majority} sentiment detection without external influence and showed that the detection error probability of such sentiment demonstrates complex behaviors \cite{tongasymptotic}. Using certain simple examples, we showed the counter-intuitive result that the error probability asymptotically decreases to zero for certain networks, but remains bounded positive for others. In this paper, we generalize that problem in three ways. Firstly, we consider {\em supermajority} sentiment detection as an important problem in certain applications. Supermajority sentiment of members towards an issue is the condition that one  sentiment  predominates among the members, with a level of support greater than the typical threshold of one-half used to define majority sentiment. An example is the requirement for the passage of constitutional amendments in the US, or the decision by companies to develop expensive features if an overwhelming fraction of its users demand it. Secondly, we allow for an external influence that biases the sentiment of the network members in the positive or negative direction. We would like to investigate the interaction of supermajority detection with the strength of the external influence. We will recover the results in \cite{tongasymptotic} as a special case of this analysis. Finally, we present results that allow analysis of the problem in certain graphs where closed form solutions are not available (such as Lattice graphs).

In this paper, we attempt to answer the following questions:
\begin{itemize}
\item
What is the relation between supermajority sentiment detection and the distribution of average member sentiments?
\item
Given a supermajority threshold level $S$, do the average sentiments concentrate near $S$ or far from $S$?
\item
What is the asymptotic supermajority detection error probability for various stylized social networks? 
\end{itemize}

We show that the asymptotic performance of supermajority sentiment detection is closely related to the asymptotic distribution of average member sentiments. In situations where the average sentiments stay away from $S$, the detection is asymptotically accurate, while if the average sentiments stay near $S$, the detection is asymptotically inaccurate. In turn, the 
distribution of average  sentiments depends on the graph structure, the external influence, and in some graphs, also on the strength of the member connections to each other.

The paper is organized as follows. Section \ref{sec:model} introduces the supermajority sentiment detection problem formally. Section \ref{sec:msd} presents the main results of the paper as a sequence of theorems that allow analysis of the detection error probability under various assumptions on the social network model. Section \ref{sec:networks} uses the previous section's results to calculate the detection error probability for various examples of networks, and shows that the error probability exhibits counter-intuitive behavior in many cases. Section \ref{sec:num} presents numerical results and Section \ref{sec:conclusion} concludes the paper.

\section{System Model}
\label{sec:model}
The social network structure is modeled as an undirected graph, as shown in Fig.~\ref{fig_model}.
\begin{figure}[ht]
\vspace*{-0.6cm}
\centerline{\includegraphics[width=0.8\linewidth]{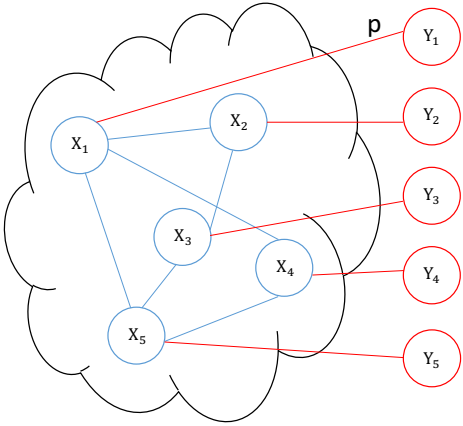}}
\vspace*{-0.5cm}
\caption{Markov random field model of sentiment detection.}
\label{fig_model}
\end{figure}
Let $\bl{X} = (X_1, \dots, X_n)^T \in \{-1, 1\}^n$ denote the vector of binary sentiments of $n$ members, where $1$ and $-1$ denote positive or negative sentiments, respectively. We adopt a homogeneous Ising Markov random field (MRF) prior on $\bl{X}$ as:
\begin{align}
\label{eq:isingPrior}
p({\bl x}) = \frac{\exp(\beta {\bl{x}}^TA_n{\bl{x}} + h {\bl 1}^T{\bl x})}{Z_n(\beta, h)}.
\end{align}
Here ${\bl 1}$ is the vector of all-ones. $A_n$ denotes the symmetric adjacency matrix of the graph, with $[A_n]_{ij}=0/1$ representing absence/presence of an edge between members $i,j$. $\beta > 0$ is the inverse temperature parameter, characterizing the connection strength, i.e., connected members are more probable to share the same sentiment if  $\beta$ is larger. $h$ is the external influence strength, i.e., sentiments are biased to be positive (negative) by an external influence source if $h > 0$ ($h < 0$). $h$ can be used to model, for example, the societal bias of the network users towards the topic of interest (such as liberal societies generally favoring liberal policies). The normalizer $Z_n(\beta, h) = \sum_{\bl{x}\in\{-1,1\}^n}\exp(\beta {\bl x}^T A_n {\bl x} + h {\bl 1}^T{\bl x})$ is called the partition function. As a special case, the
Empty graph (no edges) is equivalent to assuming that sentiments are independent and identically distributed (i.i.d.) random variables, which is assumed in typical polling analysis. The key results in this paper will be asymptotic, for which we assume that in a given network example, increasing $n$ results in a sequence of graphs, with associated adjacency matrices $A_n$.

In the paper, we denote $\overline{X_n} = \frac{1}{n}\sum_{i=1}^n X_i$ as the sample average (equivalently, the average member sentiments), $\mb{E}[\cdot]$ and $\mb{V}[\cdot]$ as the mean and variance, and $\xrightarrow{d}$ as convergence in distribution.

\subsection{Supermajority Sentiment Detection}
Given a supermajority threshold level $S \in (-1, 1)$, the supermajority sentiment is defined as the Bernoulli variable
\begin{align}
m = \sgn(\overline{X_n} - S),
\end{align}
Thus, the supermajority sentiment is considered to be positive if more than $(1+S)/2$ fraction of sentiments are positive. Here we assume that $n(1+S)/2$ is not integer to avoid trivial ambiguity. For example, if $S = 0$, this degenerates to the typical majority sentiment variable (more than one-half the sentiments are positive), while if $S=1/3$, we are considering a supermajority of two-thirds (such as the fraction of votes needed in the US Congress to over-ride a Presidential veto). 

Since sentiments $\bl{X}$ are unknown to observers, we hope to estimate the supermajority sentiment by using noisy observations of $\bl{X}$, called {\em measured sentiments} $\bl{Y}$. $\bl{Y} = (Y_1, \dots, Y_n)^T \in \{-1, 1\}^n$ is modeled as conditionally independent binary measurements of $\bl{X}$ each with cross-over probability $p$, i.e., the output of a binary symmetric channel with input $\bl{X}$. We assume that $p < 1/2$. The error in the measured sentiment could arise due to error in automatic language analysis (as in the case of automated Twitter tweet analysis) or due to intentional prevarication by users in responding to a pollster.

The Maximum Aposteriori (MAP) detector of the supermajority sentiment is difficult to calculate due to the curse of dimensionality, and is even harder to analyze. Therefore, we will use a naive (but reasonable) detector for the supermajority sentiment:
\begin{align}
\label{eq:majorDetector}
\hat{m} = \sgn(\overline{Y_n} - (1-2p)S).
\end{align}
This detector estimates the supermajority sentiment as the supermajority of noisy measurements with level $(1-2p)S$. The detector is called `naive' because it does not use the knowledge of the network, neither the adjacency matrix $A_n$ nor parameters $\beta,h$. The detector is based on the result that $\mb{E}[\overline{Y_n}|{\bf X}] = (1-2p)\overline{X_n}$, so that $\overline{X_n} \gtrless S$
is equivalent to $\mb{E}[\overline{Y_n}|{\bf X}] \gtrless (1-2p)S$. Further, this can be proved to be the MAP detector in the case of Empty graph (i.e., i.i.d. sentiments) in the limit of large $n$. Also, note that in the more common case of majority sentiment detection ($S=0$), this detector reduces to deciding whether the sum of sentiments is positive by simply checking whether the sum of measured sentiments is positive.  In general, it is not the optimal MAP detector. However, it will be sufficient to illustrate the key insights of this paper. 

The detection error probability (equivalently, classification error probability of the two sentiment class problem) is 
\begin{align}
\label{eq:errorProb}
P^{(n)}_e = \mb{P}(m \neq \hat{m}).
\end{align}
We hope to investigate whether the supermajority sentiment detection is asymptotically accurate, i.e., whether $P^{(n)}_e$ becomes arbitrarily small when $n$ is sufficiently large.

\section{Error performance of supermajority sentiment detection}
\label{sec:msd}

In \cite{tongasymptotic}, we showed the surprising result that majority sentiment detection without external influence is not necessarily accurate, even in the limit of large $n$. In fact, in certain graphs, that problem shows subtle behavior, switching from asymptotic accuracy to bounded accuracy as the connection strength between the social network members decreases. Thus, it is not apriori  obvious when supermajority sentiment detection in the presence of external influence, as considered in this paper, will be  accurate.
In this section, we analyze the asymptotic accuracy of supermajority sentiment detection. Since it is difficult to find closed form results on the error performance of such detection, we present a set of theorems that allow us to analyze the performance for different types of graphs. These theorems show that the asymptotic performance of the detection error probability is related to the asymptotic distribution of $\overline{X_n}$ as $n \to \infty$. These results will allow us to analyze the error behavior for various interesting network examples in Section \ref{sec:networks}.

We begin by first obtaining an upper bound on the error probability that applies to any fine $n$, as below.
\begin{theorem} 
\label{thm:Hoeffd} An upper bound on the detection error probability is:
$$P_e^{(n)} \le \mb{E}\left[\exp\left(-C_p(\sqrt{n}(\overline{X_n}-S))^2\right)\right],$$
where $C_p = \frac{1}{2}(1-2p)^2$.
\end{theorem}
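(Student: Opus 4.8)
The plan is to condition on $\bl{X}$ and apply Hoeffding's inequality to the conditionally independent measurements $\bl{Y}$. First I would note that, given $\bl{X}$, the coordinates $Y_i$ are independent, each supported on $\{-1,1\}$, with $\mb{E}[Y_i \mid \bl{X}] = (1-2p)X_i$, so $\mb{E}[\overline{Y_n} \mid \bl{X}] = (1-2p)\overline{X_n}$. Since by assumption $n(1+S)/2$ is not an integer, $\overline{X_n} \neq S$ almost surely, so it suffices to handle the two cases $\overline{X_n} > S$ and $\overline{X_n} < S$ separately.

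In the case $\overline{X_n} > S$ we have $m = \sgn(\overline{X_n}-S) = +1$, so a detection error occurs precisely when $\hat{m} = -1$, i.e.\ when $\overline{Y_n} < (1-2p)S$. Writing this event as a lower-tail deviation of $\overline{Y_n}$ below its conditional mean,
$$\mb{P}(m \neq \hat m \mid \bl{X}) = \mb{P}\!\left(\overline{Y_n} - \mb{E}[\overline{Y_n}\mid\bl{X}] < -(1-2p)(\overline{X_n}-S) \;\middle|\; \bl{X}\right),$$
and the deviation amount $(1-2p)(\overline{X_n}-S)$ is strictly positive. Hoeffding's inequality applied to $n\overline{Y_n}=\sum_i Y_i$ (each term in an interval of length $2$, conditionally on $\bl{X}$) then gives an upper bound of $\exp\!\big(-\tfrac{2n^2 (1-2p)^2(\overline{X_n}-S)^2}{4n}\big) = \exp\!\big(-C_p\,(\sqrt n(\overline{X_n}-S))^2\big)$ with $C_p = \tfrac12(1-2p)^2$. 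The case $\overline{X_n} < S$ is symmetric: now $m=-1$, an error means $\overline{Y_n} > (1-2p)S$, which is an upper-tail deviation of $\overline{Y_n}$ above its conditional mean by the positive amount $(1-2p)(S-\overline{X_n})$, and the upper-tail Hoeffding bound yields the same expression $\exp\!\big(-C_p\,(\sqrt n(\overline{X_n}-S))^2\big)$.

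Combining the two cases, $\mb{P}(m \neq \hat m \mid \bl{X}) \le \exp\!\big(-C_p\,(\sqrt n(\overline{X_n}-S))^2\big)$ for every realization of $\bl{X}$, and taking expectation over $\bl{X}$ (via the tower property, $P_e^{(n)} = \mb{E}[\mb{P}(m\neq\hat m\mid\bl{X})]$) gives the claimed bound.

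There is no serious obstacle here; the only points requiring a little care are (i) making sure the conditioning is done correctly so that Hoeffding applies to genuinely independent bounded variables, and (ii) observing that the deviation threshold is itself a function of $\bl{X}$ — which is harmless because Hoeffding's bound holds pointwise in $\bl{X}$ before the outer expectation is taken. The bound is loose in that it ignores the prior correlations in $\bl{X}$ entirely inside the conditional step, which is exactly why the subsequent sections must study the distribution of $\overline{X_n}$ to extract sharp conclusions.
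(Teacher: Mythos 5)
Your proposal is correct and follows essentially the same route as the paper: condition on $\bl{X}$, apply Hoeffding's inequality to the bounded, conditionally independent measurements with conditional mean $(1-2p)\overline{X_n}$, bound the conditional error probability by $\exp\left(-C_p n(\overline{X_n}-S)^2\right)$, and take the outer expectation. Your explicit two-case split by the sign of $\overline{X_n}-S$ is only a cosmetic difference from the paper's formulation via $Z_i = Y_i - (1-2p)X_i$ and the product event $(\overline{X_n}-S)(\overline{Y_n}-(1-2p)S)<0$.
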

\begin{proof}
Let $Z_i = Y_i - (1-2p)X_i$. Since $Z_i$s are conditionally independent given ${\bf X}$, with $\mb{E}[Z_i|{\bf X}] = 0$ and $\max Z_i - \min Z_i \le 2$, Hoeffding's inequality tells us that average $\overline{Z_n}$ satisfies $\mb{P}(\overline{Z_n} > \epsilon \mid {\bf X}) \le \exp\left(-\frac{1}{2}n\epsilon^2\right)$ and $\mb{P}(\overline{Z_n} < -\epsilon \mid {\bf X}) \le \exp\left(-\frac{1}{2}n\epsilon^2\right)$ for any $\epsilon > 0$. By definition (\ref{eq:errorProb}):
\begin{align}
\label{eq:PeFormula}
P_e^{(n)} &= \mb{P}((\overline{X_n} - S)(\overline{Y_n} - (1-2p)S)< 0) \nonumber\\
&= \mb{P}((\overline{X_n} - S)\overline{Z_n} < -(1-2p)(\overline{X_n}-S)^2) \nonumber \\
&= \mb{E}[ \mb{P}((\overline{X_n}-S) \overline{Z_n} <  -(1-2p)(\overline{X_n}-S)^2 \mid {\bf X}) ]\\
& \le \mb{E}[\exp(-C_p(\sqrt{n}(\overline{X_n}-S))^2)]. \nonumber 
\end{align}
\end{proof}

In some applications, we may want to estimate the supermajority sentiment by using only partial observations. For example, in large graphs, we may want to poll only a subset of members to save on cost. Define partial observations $Y_{i;\delta} = \begin{cases} Y_i, w.p.~\delta \\0, w.p.~1-\delta \end{cases}$, which means that the $i$th member's sentiment is measured independently with probability $\delta$. The detector based on partial observations is $\hat{m}_\delta = \sgn(\overline{Y_{n;\delta}}-(1-2p)S)$. An upper bound on its detection error probability is as below.
\begin{corollary}
\label{cor:Hoeff}
$P^{(n)}_{e;\delta} \le \mb{E}\left[\exp\left(-\delta^2 C_p(\sqrt{n}(\overline{X_n}-S))^2\right)\right]$.
\end{corollary}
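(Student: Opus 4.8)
The plan is to mimic the proof of Theorem \ref{thm:Hoeffd}, replacing the per-member measurement $Y_i$ by the thinned measurement $Y_{i;\delta}$ and tracking the effect of the extra randomness on the conditional mean and the bounded-difference range that feed Hoeffding's inequality. First I would condition on $\bl{X}$ and on the thinning pattern, i.e.\ on which members are observed. Writing $B_i$ for the Bernoulli$(\delta)$ thinning indicator (so $Y_{i;\delta} = B_i Y_i$), note that conditionally on $\bl{X}$ we have $\mb{E}[Y_{i;\delta}\mid\bl{X}] = \delta(1-2p)X_i$, so the natural centered variables are $Z_{i;\delta} = Y_{i;\delta} - \delta(1-2p)X_i$; these are conditionally independent given $\bl{X}$, have conditional mean zero, and satisfy $\max Z_{i;\delta} - \min Z_{i;\delta} \le 2$ (each $Z_{i;\delta}$ lies in an interval of length at most $2$). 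Hoeffding's inequality then gives $\mb{P}(\overline{Z_{n;\delta}} \gtrless \pm\epsilon \mid \bl{X}) \le \exp(-\tfrac12 n\epsilon^2)$.

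Next I would rerun the chain of equalities in \eqref{eq:PeFormula}. The detector is $\hat m_\delta = \sgn(\overline{Y_{n;\delta}} - (1-2p)S)$ and the event $\{m\neq\hat m_\delta\}$ is $\{(\overline{X_n}-S)(\overline{Y_{n;\delta}}-(1-2p)S)<0\}$. Substituting $\overline{Y_{n;\delta}} = \overline{Z_{n;\delta}} + \delta(1-2p)\overline{X_n}$ and rearranging, the error event becomes $\{(\overline{X_n}-S)\overline{Z_{n;\delta}} < -\delta(1-2p)(\overline{X_n}-S)^2\}$, exactly as before but with the extra factor $\delta$ on the right-hand side. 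Conditioning on $\bl{X}$ and applying the two-sided Hoeffding bound with $\epsilon = \delta(1-2p)|\overline{X_n}-S|$ yields, for each fixed $\bl{X}$, the bound $\exp\big(-\tfrac12 n \delta^2(1-2p)^2(\overline{X_n}-S)^2\big) = \exp\big(-\delta^2 C_p (\sqrt n(\overline{X_n}-S))^2\big)$. Taking expectation over $\bl{X}$ gives the claimed inequality.

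The only real subtlety — and the step I would be most careful about — is bookkeeping of the randomness: $\overline{X_n}$ is $\bl{X}$-measurable, so once we condition on $\bl{X}$ it is a constant and the threshold $\epsilon$ in Hoeffding's inequality is legitimately deterministic, while the thinning indicators $B_i$ are part of the $Z_{i;\delta}$ noise and must be absorbed into the conditional-independence/bounded-range hypotheses rather than conditioned on. One should double-check that $Z_{i;\delta}$ genuinely has range at most $2$ in all four cases ($B_i\in\{0,1\}$, $Y_i\in\{-1,1\}$): when $B_i=1$ it equals $Y_i - \delta(1-2p)X_i \in [-1-\delta(1-2p), 1+\delta(1-2p)]\subseteq[-2,2]$, and when $B_i=0$ it equals $-\delta(1-2p)X_i \in [-1,1]$, so the span is at most $2$ as required. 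Everything else is a line-by-line repeat of the proof of Theorem \ref{thm:Hoeffd}, and setting $\delta=1$ recovers that theorem as a sanity check.
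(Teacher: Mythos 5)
Your overall strategy (center the thinned observations, use conditional independence and range $2$, apply Hoeffding with $\epsilon=\delta(1-2p)\lvert\overline{X_n}-S\rvert$) is the right one and is exactly the argument the corollary is meant to follow from, but your key rearrangement step contains a genuine error. With your reading $\overline{Y_{n;\delta}}=\frac{1}{n}\sum_i Y_{i;\delta}$ and the threshold $(1-2p)S$, substituting $\overline{Y_{n;\delta}}=\overline{Z_{n;\delta}}+\delta(1-2p)\overline{X_n}$ into the error event gives
$(\overline{X_n}-S)\,\overline{Z_{n;\delta}} < -(1-2p)(\overline{X_n}-S)(\delta\overline{X_n}-S)$,
not $(\overline{X_n}-S)\,\overline{Z_{n;\delta}} < -\delta(1-2p)(\overline{X_n}-S)^2$; the difference is the leftover term $(1-2p)(1-\delta)S(\overline{X_n}-S)$, which vanishes only when $S=0$ or $\delta=1$. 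This is not a cosmetic slip: under your literal reading the statistic $\overline{Y_{n;\delta}}$ has conditional mean $\delta(1-2p)\overline{X_n}$ but is compared against $(1-2p)S$, so the detector is miscalibrated, and for instance when $S>0$ and $S<\overline{X_n}<S/\delta$ the conditional error probability tends to one, so no bound of the claimed form can be derived along that route. In other words, the very bookkeeping point you flagged as the ``only real subtlety'' is where the argument breaks.

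The repair is to align the threshold with the conditional mean, which is what the statement implicitly assumes: either interpret $\overline{Y_{n;\delta}}$ as the sum divided by $n\delta$ (the expected number polled), in which case $\mb{E}[\overline{Y_{n;\delta}}\mid\bl{X}]=(1-2p)\overline{X_n}$ and the stated threshold $(1-2p)S$ is correct, or keep the $1/n$ normalization and use the threshold $\delta(1-2p)S$. Under either consistent reading, the error event does reduce to $(\overline{X_n}-S)\frac{1}{n}\sum_i Z_{i;\delta} < -\delta(1-2p)(\overline{X_n}-S)^2$ with $Z_{i;\delta}=Y_{i;\delta}-\delta(1-2p)X_i$, and the rest of your proof (conditional independence given $\bl{X}$, range at most $2$, two-sided Hoeffding, expectation over $\bl{X}$) goes through verbatim and yields the claimed bound with constant $\delta^2 C_p$, recovering Theorem~\ref{thm:Hoeffd} at $\delta=1$.
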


This corollary shows that the effect on error performance of polling only a fixed fraction of members is no worse than polling all members but increasing each member's measurement error probability from $p$ to $((1-\delta)(1-2p)+p)/((1-\delta)(1-2p)+1)$.

While the previous theorem can be used to bound the error probability for finite $n$, it is difficult to calculate the bound in large graphs. However, in an  sequence of graphs of increasing size, it may be possible to calculate  the asymptotic behavior of $\overline{X_n}$. Therefore, we derive  asymptotic upper and lower bounds on error probability in Theorem~\ref{thm:Q}. This will follow from a conditional central limit theorem given in Lemma~\ref{lemma:CLT}, whose proof can be found in \cite{tongasymptotic}.

\begin{lemma}
\label{lemma:CLT} 
For all ${\bf X}$, $$\sqrt{n}(\overline{Y_n} - (1-2p)\overline{X_n}) \mid {\bf X} \xrightarrow{d} N(0, 4p(1-p)).$$ 
\end{lemma}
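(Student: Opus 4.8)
The plan is to fix an arbitrary sentiment vector $\bf X$ and apply a central limit theorem to the conditional law of $\sqrt{n}(\overline{Y_n} - (1-2p)\overline{X_n})$ given $\bf X$. First I would write this quantity as $\frac{1}{\sqrt{n}}\sum_{i=1}^n Z_i$, where $Z_i = Y_i - (1-2p)X_i$, exactly as in the proof of Theorem \ref{thm:Hoeffd}. Conditioned on $\bf X$, the $Z_i$ are independent (the binary symmetric channel acts independently on each coordinate), with $\mathbb{E}[Z_i \mid {\bf X}] = 0$ since $\mathbb{E}[Y_i \mid {\bf X}] = (1-2p)X_i$. A short computation of the conditional second moment gives $\mathbb{V}[Z_i \mid {\bf X}] = \mathbb{V}[Y_i \mid {\bf X}] = 1 - (1-2p)^2 = 4p(1-p)$, which does not depend on $X_i$ (because $Y_i \in \{-1,1\}$ so $\mathbb{E}[Y_i^2 \mid {\bf X}] = 1$). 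Hence, conditioned on $\bf X$, the summands are independent, mean zero, and identically distributed in the sense that they all have the same variance $\sigma^2 = 4p(1-p)$, though their laws are not literally identical (they are shifted Bernoulli variables depending on the sign of $X_i$).

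Because the $Z_i$ are independent but not identically distributed, the appropriate tool is the Lindeberg--Feller CLT for triangular arrays rather than the classical i.i.d.\ CLT. I would verify the Lindeberg condition, which here is immediate: each $Z_i$ is bounded, $|Z_i| \le 2$, so for the array $Z_{n,i} = Z_i/\sqrt{n}$ we have $|Z_{n,i}| \le 2/\sqrt{n} \to 0$ uniformly in $i$, and thus the truncated second moments $\mathbb{E}[Z_{n,i}^2 \mathbf{1}\{|Z_{n,i}| > \varepsilon\} \mid {\bf X}]$ vanish for all $n$ large enough (in fact they are exactly zero once $2/\sqrt{n} < \varepsilon$). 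Since the total conditional variance is $\frac{1}{n}\sum_{i=1}^n \sigma^2 = \sigma^2 = 4p(1-p)$ for every $n$, Lindeberg--Feller yields $\frac{1}{\sqrt{n}}\sum_{i=1}^n Z_i \mid {\bf X} \xrightarrow{d} N(0, 4p(1-p))$, which is the claim. (Equivalently one could invoke the Lyapunov condition with third moments, which are also bounded; either route works.) One should note that the limiting distribution is the same for every sequence of conditioning vectors $\bf X$, which is exactly the uniformity ``for all $\bf X$'' asserted in the statement.

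The only mild subtlety — and the step I would be most careful about — is bookkeeping the non-identical distributions of the $Z_i$: depending on $X_i = +1$ or $X_i = -1$, $Z_i$ takes values in $\{2p, 2p-2\}$ or $\{-2p, 2-2p\}$ respectively, with probabilities $1-p$ and $p$. One must confirm that in both cases the mean is $0$ and the variance is $4p(1-p)$, so that the normalization constant in the Gaussian limit is genuinely independent of $\bf X$; this is what makes the conditional CLT hold uniformly and lets Theorem \ref{thm:Q} downstream integrate the limit against the law of $\overline{X_n}$ without worrying about the conditioning. Since this lemma is quoted from \cite{tongasymptotic}, the write-up can be brief, but the Lindeberg verification via boundedness of $Z_i$ is the natural self-contained argument.
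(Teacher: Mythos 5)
Your proposal is correct. The paper itself does not prove Lemma~\ref{lemma:CLT} in-line (it defers to the cited earlier work), so there is no in-paper argument to compare against; your Lindeberg--Feller argument is the standard self-contained proof, and the key computations check out: conditioned on ${\bf X}$, the $Z_i = Y_i - (1-2p)X_i$ are independent, bounded by $2$, mean zero, and have conditional variance $1-(1-2p)^2 = 4p(1-p)$ regardless of the sign of $X_i$, so the Lindeberg condition holds trivially and the limit law is the same for every sequence of conditioning configurations, which is exactly the ``for all ${\bf X}$'' claim needed downstream in Theorem~\ref{thm:Q}.
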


\begin{theorem}
\label{thm:Q} The superior and inferior limit of the detection error probability is:
\begin{align*}
\liminf_{n\to\infty}P_e^{(n)} &= \liminf_{n\to\infty}\mb{E} \left[Q\left(D_p \sqrt{n} \lvert\overline{X_n}-S \rvert \right) \right] \\
\limsup_{n\to\infty}P_e^{(n)} &= \limsup_{n\to\infty}\mb{E} \left[Q\left(D_p \sqrt{n} \lvert\overline{X_n}-S \rvert \right)\right],
\end{align*} where $D_p = \frac{1-2p}{\sqrt{4p(1-p)}}$, $Q(\cdot)$ is the tail probability of standard normal distribution: $Q(x) = \frac{1}{\sqrt{2\pi}}\int_x^\infty \exp(-t^2/2)dt$.
\end{theorem}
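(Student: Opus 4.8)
The plan is to start from the exact identity for the error probability already derived inside the proof of Theorem~\ref{thm:Hoeffd}, and then replace the conditional tail probability by its Gaussian value supplied by Lemma~\ref{lemma:CLT}, while tracking uniformity carefully so that the approximation survives the outer expectation over ${\bf X}$.

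First, recall $Z_i = Y_i - (1-2p)X_i$ and $\overline{Z_n} = \overline{Y_n} - (1-2p)\overline{X_n}$, and that line~(\ref{eq:PeFormula}) gives $P_e^{(n)} = \mb{E}[g_n({\bf X})]$ with $g_n({\bf X}) = \mb{P}\big((\overline{X_n}-S)\,\overline{Z_n} < -(1-2p)(\overline{X_n}-S)^2 \mid {\bf X}\big)$. Since $n(1+S)/2$ is not an integer, $\overline{X_n} \neq S$ almost surely, so we may divide the defining event by $|\overline{X_n}-S|$ and write, with $s := \sgn(\overline{X_n}-S) \in \{-1,+1\}$ (a deterministic function of ${\bf X}$),
$$g_n({\bf X}) = \mb{P}\!\left( s\,\sqrt{n}\,\overline{Z_n} < -(1-2p)\sqrt{n}\,\lvert\overline{X_n}-S\rvert \;\middle|\; {\bf X}\right).$$

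Next I would standardize. Conditionally on ${\bf X}$ the $Z_i$ are independent with mean $0$, common variance $\sigma^2 := 4p(1-p)$, and $\lvert Z_i\rvert \le 2$, so $\sqrt{n}\,\overline{Z_n}/\sigma$ has conditional mean $0$, variance $1$, and a uniformly bounded third absolute moment. The computation behind Lemma~\ref{lemma:CLT} then actually delivers a Berry--Esseen rate: there is a constant $K$, independent of $n$ and of ${\bf X}$, with $\sup_t\big\lvert\mb{P}(\sqrt{n}\,\overline{Z_n}/\sigma \le t \mid {\bf X}) - \Phi(t)\big\rvert \le K/\sqrt{n}$, where $\Phi = 1-Q$. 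Evaluating this at $t = \mp D_p\sqrt{n}\lvert\overline{X_n}-S\rvert$ for $s = \pm 1$, and using $\Phi(-x) = Q(x)$ together with $(1-2p)/\sigma = D_p$, gives, uniformly in ${\bf X}$,
$$\big\lvert\, g_n({\bf X}) - Q\!\big(D_p\sqrt{n}\,\lvert\overline{X_n}-S\rvert\big)\,\big\rvert \;\le\; \frac{2K}{\sqrt{n}}.$$
Taking expectations over ${\bf X}$ yields $\big\lvert P_e^{(n)} - \mb{E}[Q(D_p\sqrt{n}\lvert\overline{X_n}-S\rvert)]\big\rvert \le 2K/\sqrt{n} \to 0$, and the stated identities for $\liminf$ and $\limsup$ follow at once.

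The main obstacle is exactly the uniformity invoked in the previous paragraph. Lemma~\ref{lemma:CLT} as stated is a pointwise statement (per configuration sequence), and since the threshold $(1-2p)\sqrt{n}\lvert\overline{X_n}-S\rvert$ is itself random and $n$-dependent, a plain appeal to the conditional CLT followed by dominated convergence does not close the argument. What rescues it is that the $Z_i$ are bounded, so Esseen's inequality for independent, non-identically-distributed summands furnishes an $O(1/\sqrt{n})$ error that does not depend on the particular ${\bf X}$; everything else is bookkeeping with $\Phi(-x)=Q(x)$ and the one-sided tail for $s=-1$. (If one wishes to avoid Berry--Esseen, one can instead argue along subsequences that $g_n({\bf X}) - Q(D_p\sqrt{n}\lvert\overline{X_n}-S\rvert) \to 0$ and then invoke bounded convergence, but the quantitative route is cleaner and self-contained.)
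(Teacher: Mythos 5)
Your proposal is correct, and it diverges from the paper's own argument at the key analytic step. The paper uses the same starting identity (\ref{eq:PeFormula}), defines $\varepsilon_n({\bf X})$ as the gap between the conditional probability and $Q\left(D_p\sqrt{n}\lvert\overline{X_n}-S\rvert\right)$, invokes the qualitative conditional CLT of Lemma~\ref{lemma:CLT} to get $\varepsilon_n({\bf X})\to 0$ pointwise, and then exchanges limit and expectation by dominated convergence using $\lvert\varepsilon_n\rvert\le 1$. You instead make the comparison quantitative: exploiting that, conditionally on ${\bf X}$, the $Z_i$ are independent, bounded, mean zero, with variance exactly $4p(1-p)$ independent of the configuration, you apply Berry--Esseen to get $\sup_t\lvert\mb{P}(\sqrt{n}\,\overline{Z_n}/\sigma\le t\mid{\bf X})-\Phi(t)\rvert\le K/\sqrt{n}$ uniformly in ${\bf X}$, so the replacement of the conditional probability by the $Q$ term costs at most $O(n^{-1/2})$ even though the threshold $D_p\sqrt{n}\lvert\overline{X_n}-S\rvert$ is random and $n$-dependent. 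This buys two things: it sidesteps the delicate point you correctly flag --- the paper's pointwise-CLT-plus-DCT step is formally awkward because the conditioning vector and the evaluation point change with $n$, and to make it airtight one must upgrade Lemma~\ref{lemma:CLT} to uniform convergence of the conditional CDFs (e.g., via Polya's theorem for convergence to a continuous limit, applied along arbitrary configuration sequences) --- and it yields an explicit rate, $\lvert P_e^{(n)}-\mb{E}[Q(D_p\sqrt{n}\lvert\overline{X_n}-S\rvert)]\rvert = O(n^{-1/2})$, which the paper's qualitative argument does not provide. The paper's route is shorter and needs only the already-stated Lemma~\ref{lemma:CLT}; yours is self-contained, uniform by construction, and strictly stronger in its conclusion. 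Your bookkeeping (conditional variance $4p(1-p)$ for either sign of $X_i$, the identity $(1-2p)/\sqrt{4p(1-p)}=D_p$, the use of the non-integrality of $n(1+S)/2$ to exclude $\overline{X_n}=S$, and the one-sided handling of the case $\sgn(\overline{X_n}-S)=-1$) is accurate.
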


\begin{proof}
Start with equation (\ref{eq:PeFormula}). Define $\varepsilon_n({\bf X}) = \mb{P}(\sqrt{n}(\overline{X_n}-S)\sqrt{n}\,\overline{Z_n} < -(1-2p)(\sqrt{n}(\overline{X_n}-S))^2 \mid {\bf X}) - Q\left(D_p \sqrt{n} \lvert\overline{X_n}-S \rvert \right)$. From the asymptotic distribution in Lemma~\ref{lemma:CLT}, $\lim\limits_{n\to\infty} \varepsilon_n({\bf X}) = 0, \forall {\bf X}$. So,
\begin{align*}
\liminf_{n\to\infty} & P_e^{(n)} = \liminf_{n\to\infty}\mb{E} [\mb{P}(\sqrt{n}(\overline{X_n}-S) \sqrt{n}\,\overline{Z_n} < \\
& \qquad \qquad \qquad -(1-2p)(\sqrt{n}(\overline{X_n}-S))^2 \mid {\bf X}) ]\\
&= \liminf_{n\to\infty} \mb{E} \left[Q\left(D_p \sqrt{n} \lvert\overline{X_n}-S \rvert \right)+\varepsilon_n({\bf X})\right] \\
&= \liminf_{n\to\infty} \mb{E}\left[Q\left(D_p \sqrt{n} \lvert\overline{X_n}-S \rvert \right)\right].
\end{align*}
Here $\liminf\limits_{n\to\infty} \mb{E} \left[\varepsilon_n({\bf X})\right] = \mb{E}\left[ \liminf\limits_{n\to\infty}\varepsilon_n({\bf X}) \right] = 0$ by Lebesgue's dominated convergence theorem \cite{dudley2002real}, since $\lvert \varepsilon_n({\bf X}) \rvert \le 1$. Use the same technique for `$\limsup$' case.
\end{proof}

The asymptotic error probability can be exactly obtained if the exact asymptotic distribution of $\overline{X_n}$ can be obtained, as shown below. Section \ref{sec:networks} shows the productive use of this result in various  graphs of interest.
\begin{corollary}
\label{cor:lim}
\begin{enumerate}[(a)]
\item If $\sqrt{n}(\overline{X_n}- S) \xrightarrow{d} \Phi$, where $\Phi$ is a distribution, then
 $$\lim_{n\to\infty}P_e^{(n)} = \int_{-\infty}^\infty Q\left(D_p\left\lvert x \right\rvert \right) \Phi(dx).$$ 

\item Specifically, if $\sqrt{n}(\overline{X_n} - S)\xrightarrow{d} N(0, \sigma^2)$, then $$\lim_{n\to\infty}P_e^{(n)}= \frac{1}{\pi} \arccot \left(D_p\sigma\right) > 0.$$
\end{enumerate}
\end{corollary}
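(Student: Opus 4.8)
The plan is to derive both parts directly from Theorem~\ref{thm:Q}, which already reduces the asymptotic error probability to the behavior of $\mb{E}[Q(D_p\sqrt{n}\lvert\overline{X_n}-S\rvert)]$. Part (a) is then a weak-convergence argument, and part (b) is the evaluation of a Gaussian cone probability.

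\textbf{Part (a).} By Theorem~\ref{thm:Q}, $\liminf_{n\to\infty}P_e^{(n)} = \liminf_{n\to\infty}\mb{E}[Q(D_p\sqrt{n}\lvert\overline{X_n}-S\rvert)]$, and likewise for $\limsup$. I would set $g(x) = Q(D_p\lvert x\rvert)$ and first note that $g$ is continuous on $\mb{R}$ (since $Q$ is smooth and $x\mapsto\lvert x\rvert$ is continuous) and bounded, $0\le g\le 1$. Hence $g$ is an admissible bounded continuous test function, and the hypothesis $\sqrt{n}(\overline{X_n}-S)\xrightarrow{d}\Phi$ gives, by the portmanteau characterization of weak convergence \cite{dudley2002real}, $\mb{E}[g(\sqrt{n}(\overline{X_n}-S))]\to\int_{-\infty}^\infty g\,d\Phi$. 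Because this limit exists, the $\liminf$ and $\limsup$ in Theorem~\ref{thm:Q} agree with it, which yields $\lim_{n\to\infty}P_e^{(n)} = \int_{-\infty}^\infty Q(D_p\lvert x\rvert)\,\Phi(dx)$.

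\textbf{Part (b).} Specializing to $\Phi=N(0,\sigma^2)$, I must evaluate $I = \int_{-\infty}^\infty Q(D_p\lvert x\rvert)\,\frac{1}{\sqrt{2\pi}\sigma}e^{-x^2/(2\sigma^2)}\,dx$. Writing $x=\sigma Z$ with $Z\sim N(0,1)$ gives $I=\mb{E}[Q(D_p\sigma\lvert Z\rvert)]$, and since $Q(t)=\mb{P}(W>t)$ for $W\sim N(0,1)$ independent of $Z$, conditioning on $Z$ gives $I = \mb{P}(W > D_p\sigma\lvert Z\rvert)$, a probability computed under the rotationally invariant standard Gaussian law of $(Z,W)$ on $\mb{R}^2$. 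The event $\{W>D_p\sigma\lvert Z\rvert\}$ is the cone bounded by the lines $W=\pm D_p\sigma Z$ and opening about the positive $W$-axis; the angle between that axis and either boundary line is $\arccot(D_p\sigma)$, so the cone has total angular width $2\arccot(D_p\sigma)$. By rotational symmetry the probability of a cone equals its angular width divided by $2\pi$, giving $I = \frac{1}{\pi}\arccot(D_p\sigma)$. Positivity is then immediate: $D_p>0$ because $p<1/2$, and $\sigma>0$, so $\arccot(D_p\sigma)\in(0,\pi/2)$ and $I>0$.

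\textbf{Main obstacle.} There is no deep difficulty here: part (a) is a clean combination of Theorem~\ref{thm:Q} with the portmanteau theorem, the only point needing care being that $g(x)=Q(D_p\lvert x\rvert)$ is genuinely continuous and bounded (so no dominated-convergence argument beyond what Theorem~\ref{thm:Q} already supplies is required). For part (b), the one nonroutine step is recognizing the Gaussian integral as a cone probability and invoking rotational invariance to read off the $\arccot$; a reader preferring brute force could instead substitute and reduce to a standard integral, but the geometric argument is shorter and makes both the closed form and its strict positivity transparent.
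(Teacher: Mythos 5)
Your proof is correct and follows exactly the route the paper intends: the corollary is stated as an immediate consequence of Theorem~\ref{thm:Q}, obtained by applying weak convergence to the bounded continuous test function $x \mapsto Q(D_p|x|)$, and your Gaussian cone/rotational-invariance computation is the standard evaluation yielding $\frac{1}{\pi}\arccot(D_p\sigma)$. No gaps; nothing further is needed.
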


The above results on asymptotic error probability are difficult to use if the corresponding expectations are difficult to calculate. On the other hand, a critical question in such error analysis is not necessarily the exact value of the error probability, but instead, whether this probability becomes arbitrarily small for large $n$, or whether it is bounded below even with infinite number of members. Based on Theorems~\ref{thm:Hoeffd} and \ref{thm:Q}, we show in Theorem~\ref{thm:tt} that the question of whether the detection error probability tends to $0$ or not, is exactly determined by whether $\overline{X_n}$ asymptotically stays away from $S$ or not. Intuitively, if the probability of $\overline{X_n}$ being near $S$ decays to $0$, then the error probability tends to zero; if it decays to $0$ exponentially fast, then the error probability tends to zero exponentially fast. On the other hand, if the probability of $\overline{X_n}$ being near $S$ remains asymptotically positive, so does the error probability.

\begin{theorem}
\label{thm:tt}
\begin{enumerate}[(a)]
\item \label{part:tt=0}
If $\ \forall B > 0$, $\lim\limits_{n\to\infty}\mb{P}(\sqrt{n}\lvert \overline{X_n} - S \rvert \le B) = 0$, then $\lim\limits_{n\to\infty}P_e^{(n)}= 0$. 

\item \label{part:tt>0}
If $\exists B > 0$ s.t. $\liminf\limits_{n\to\infty}\mb{P}(\sqrt{n} \lvert \overline{X_n} - S\rvert \le B) > 0$, then  $\liminf\limits_{n\to\infty}P_e^{(n)}> 0$.

\item \label{part:ttExp0}
If $\exists b > 0$, $\alpha_b > 0$, s.t. $\limsup\limits_{n\to\infty} \frac{1}{n}\log P(|\overline{X_n} - S| \le b) \le -\alpha_b$, then $\limsup\limits_{n\to\infty} \frac{1}{n}\log P_e^{(n)} \le - \min\{\alpha_b, C_pb^2\}$.
\end{enumerate}
\end{theorem}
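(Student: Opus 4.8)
The plan is to deduce all three parts from the two bounds already in hand — the finite-$n$ Hoeffding bound of Theorem~\ref{thm:Hoeffd} and the sharp asymptotic characterization of Theorem~\ref{thm:Q} — by splitting the relevant expectation according to whether $\overline{X_n}$ lies inside or outside a fixed band around $S$, and exploiting that both $Q(\cdot)$ and $x \mapsto e^{-C_p x^2}$ are nonincreasing on $[0,\infty)$, together with $C_p > 0$ and $D_p > 0$ (which hold since $p < 1/2$). For part~(\ref{part:tt=0}), I would start from Theorem~\ref{thm:Q}, which gives $\limsup_n P_e^{(n)} = \limsup_n \mb{E}\bigl[Q(D_p\sqrt{n}\lvert\overline{X_n}-S\rvert)\bigr]$. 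Fix $B>0$; on $\{\sqrt{n}\lvert\overline{X_n}-S\rvert > B\}$ monotonicity of $Q$ gives $Q(D_p\sqrt{n}\lvert\overline{X_n}-S\rvert) \le Q(D_pB)$, while on the complement I bound $Q(\cdot)\le 1$, so $\mb{E}\bigl[Q(D_p\sqrt{n}\lvert\overline{X_n}-S\rvert)\bigr] \le Q(D_pB) + \mb{P}(\sqrt{n}\lvert\overline{X_n}-S\rvert\le B)$. Taking $\limsup_n$ and invoking the hypothesis kills the second term, giving $\limsup_n P_e^{(n)} \le Q(D_pB)$ for every $B$; letting $B\to\infty$ forces $\limsup_n P_e^{(n)} = 0$, and since $P_e^{(n)}\ge 0$ the limit is $0$.

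For part~(\ref{part:tt>0}) I would again use Theorem~\ref{thm:Q}, now with the matching $\liminf$ identity $\liminf_n P_e^{(n)} = \liminf_n \mb{E}\bigl[Q(D_p\sqrt{n}\lvert\overline{X_n}-S\rvert)\bigr]$. With $B$ the constant supplied by the hypothesis, I discard the contribution of the integrand off the band and use $Q(D_p\sqrt{n}\lvert\overline{X_n}-S\rvert)\ge Q(D_pB)$ on $\{\sqrt{n}\lvert\overline{X_n}-S\rvert\le B\}$ (monotonicity of $Q$ again), which yields $\mb{E}\bigl[Q(D_p\sqrt{n}\lvert\overline{X_n}-S\rvert)\bigr] \ge Q(D_pB)\,\mb{P}(\sqrt{n}\lvert\overline{X_n}-S\rvert\le B)$. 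Taking $\liminf_n$ and noting that $Q(D_pB)$ is a strictly positive constant gives $\liminf_n P_e^{(n)} \ge Q(D_pB)\,\liminf_n\mb{P}(\sqrt{n}\lvert\overline{X_n}-S\rvert\le B) > 0$.

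For part~(\ref{part:ttExp0}) I would work instead from the finite-$n$ bound $P_e^{(n)} \le \mb{E}\bigl[e^{-C_p n(\overline{X_n}-S)^2}\bigr]$ of Theorem~\ref{thm:Hoeffd}. Splitting on $\{\lvert\overline{X_n}-S\rvert\le b\}$: off the band $e^{-C_p n(\overline{X_n}-S)^2}\le e^{-C_p n b^2}$, and on the band I bound the integrand by $1$, so $P_e^{(n)} \le e^{-C_p n b^2} + \mb{P}(\lvert\overline{X_n}-S\rvert\le b)$. Applying $\frac1n\log(a_n+c_n) \le \frac{\log 2}{n} + \max\{\frac1n\log a_n, \frac1n\log c_n\}$ and letting $n\to\infty$ gives $\limsup_n \frac1n\log P_e^{(n)} \le \max\{-C_pb^2,\ \limsup_n \frac1n\log\mb{P}(\lvert\overline{X_n}-S\rvert\le b)\} \le \max\{-C_pb^2, -\alpha_b\} = -\min\{C_pb^2,\alpha_b\}$.

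The arguments are short because the analytic substance lives in Theorems~\ref{thm:Hoeffd} and~\ref{thm:Q}; Theorem~\ref{thm:tt} is essentially a repackaging that trades the explicit $Q$-expectations for the easier-to-verify question of where $\overline{X_n}$ concentrates. The only point I expect to need genuine care is part~(\ref{part:ttExp0}): one must check that the large-deviation rate of a sum of two nonnegative sequences is governed by the slower-decaying summand, i.e.\ that the $\frac1n\log 2$ correction is asymptotically negligible and that the $\limsup$ passes through the bound without slack. That bookkeeping — rather than any delicate probabilistic estimate — is the main thing to get right.
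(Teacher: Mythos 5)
Your proof is correct and takes essentially the same approach as the paper: split the expectation on a band around $S$ and use the monotone bounds supplied by Theorems~\ref{thm:Hoeffd} and~\ref{thm:Q}, with $C_p, D_p > 0$. The only immaterial differences are that the paper derives part~(\ref{part:tt=0}) from the Hoeffding bound of Theorem~\ref{thm:Hoeffd} rather than the $Q$-identity of Theorem~\ref{thm:Q}, and handles part~(\ref{part:ttExp0}) by an $\epsilon$-slack choice of $n$ instead of your $\frac{1}{n}\log 2$ log-sum bookkeeping.
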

\begin{proof}
For part~(\ref{part:tt=0}), given any $\epsilon > 0$, choose $B$ large enough such that $\exp\left(-C_p B^2\right) \le \epsilon/2$, then choose $n$ large enough such that $\mb{P}(\sqrt{n} \lvert \overline{X_n} - S\rvert \le B) \le \epsilon/2$. Theorem~\ref{thm:Hoeffd} tells that 
\begin{align*}
P_e^{(n)} &\le \mb{E}\left[\exp\left(-C_p(\sqrt{n}(\overline{X_n}-S))^2\right)\right] \\
&\le \mb{P}(\sqrt{n} \lvert \overline{X_n}-S\rvert \le B) + \exp\left(-C_pB^2\right) \le \epsilon.
\end{align*}
Thus $\lim\limits_{n\to\infty}P_e^{(n)}= 0$.

For part~(\ref{part:tt>0}), choose a $B > 0$, satisfying that $\liminf\limits_{n\to\infty}\mb{P}(\lvert \sqrt{n}(\overline{X_n}-S)\rvert \le B) > 0$. Theorem~\ref{thm:Q} tells that
\begin{align*}
\liminf_{n\to\infty}& P_e^{(n)} = \liminf_{n\to\infty}\mb{E} \left[Q\left(D_p \sqrt{n} \lvert\overline{X_n}-S \rvert \right)\right] \\
& \ge \liminf_{n\to\infty}\mb{P}(\sqrt{n} \lvert \overline{X_n}-S\rvert \le B) Q\left(D_p B \right) > 0.
\end{align*}

For Part~(\ref{part:ttExp0}), given any $\epsilon > 0$, choose $n$ large enough such that such that $\mb{P}(\lvert \overline{X_n} - S\rvert \le b) \le \exp(-n(\alpha_b - \epsilon))$. Theorem~\ref{thm:Hoeffd} tells that 
\begin{align*}
P_e^{(n)} &\le \mb{E}\left[\exp\left(-C_p(\sqrt{n}(\overline{X_n}-S))^2\right)\right] \\
&\le \mb{P}(\lvert \overline{X_n}-S\rvert \le b) + \exp\left(-C_pnb^2\right) \\
&\le \exp(-n(\alpha_b-\epsilon)) + \exp\left(-n C_pb^2\right).
\end{align*}
Thus $\limsup\limits_{n\to\infty}\frac{1}{n}\log P_e^{(n)} \le -\min\{\alpha_b-\epsilon, C_pb^2\}$. Finally let $\epsilon \to 0$.
\end{proof}
Theorem \ref{thm:tt} is easier to apply than Theorems~\ref{thm:Hoeffd} and \ref{thm:Q}, because it only needs the knowledge of the distribution of $\overline{X_n}$ near $S$ and avoids the need to calculate an expectation.

Finally, in complex cases such as the 2-dimensional Lattice graphs, the asymptotic distribution of $\overline{X_n}$ is unknown, and so, it may be difficult to apply  Theorem~\ref{thm:tt}. The state of the art in statistical physics and information theory does allow calculation of certain moments of $\overline{X_n}$ in some of these graphs. Therefore, we present Theorem~\ref{thm:tEV} below, where the analysis of error probability depends only on the asymptotic behavior of $\overline{X_n}$ near its mean. 
\begin{theorem}
\label{thm:tEV}
\begin{enumerate}[(a)]
\item \label{part:tEV=0}
If $\lim_{n\to\infty}\mb{E}[\overline{X_n}] = \mu \neq S$ and $\lim\limits_{n\to\infty} \mb{V}[\overline{X_n}] = 0$,  $\lim\limits_{n\to\infty}P_e^{(n)}= 0$. 
\item \label{part:tEV>0}
If $\lim\limits_{n\to\infty}\mb{E}[\overline{X_n}] = S$ and $\limsup\limits_{n\to\infty} n\mb{V}[\overline{X_n}] < \infty$,  $\liminf\limits_{n\to\infty}P_e^{(n)} > 0$.
\item \label{part:tEVExp0}
If $\lim\limits_{n\to\infty}\mb{E}[\overline{X_n}] = \mu \neq S$ and $\exists 0 < b < |\mu - S|, \alpha_b > 0$, such that $\limsup\limits_{n\to\infty}\frac{1}{n}\log \mb{P}(|\overline{X_n} - \mu| \ge b) \le \alpha_b$,  $\limsup\limits_{n\to\infty} \frac{1}{n} \log P_e^{(n)} \le -\min\{\alpha_b, C_p(|\mu-S|-b)^2\}$.
\end{enumerate}
\end{theorem}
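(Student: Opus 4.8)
The plan is to deduce all three parts from Theorem~\ref{thm:tt} (and, for part~(\ref{part:tEV>0}), optionally also from Theorem~\ref{thm:Q}) by converting information about $\overline{X_n}$ near its limiting mean $\mu$ into information about $\overline{X_n}$ near the threshold $S$. The only tools needed for this conversion are the triangle inequality (at the level of events) and Chebyshev's inequality (at the level of probabilities).

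For part~(\ref{part:tEV=0}) I would check the hypothesis of Theorem~\ref{thm:tt}(\ref{part:tt=0}). Fix $B>0$. Since $\mb{E}[\overline{X_n}]\to\mu$ with $\mu\neq S$, for all large $n$ we have $|\mb{E}[\overline{X_n}]-S|\ge|\mu-S|/2$ and $B/\sqrt{n}\le|\mu-S|/4$, so on the event $\{\sqrt{n}|\overline{X_n}-S|\le B\}$ the triangle inequality forces $|\overline{X_n}-\mb{E}[\overline{X_n}]|\ge|\mu-S|/4$. Chebyshev's inequality then gives $\mb{P}(\sqrt{n}|\overline{X_n}-S|\le B)\le 16\,\mb{V}[\overline{X_n}]/|\mu-S|^2\to 0$, and Theorem~\ref{thm:tt}(\ref{part:tt=0}) yields $P_e^{(n)}\to 0$. (One could equally plug the same case split directly into the bound of Theorem~\ref{thm:Hoeffd}.)

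For part~(\ref{part:tEV>0}) I would invoke Theorem~\ref{thm:tt}(\ref{part:tt>0}), which only requires exhibiting some $B>0$ with $\liminf_{n\to\infty}\mb{P}(\sqrt{n}|\overline{X_n}-S|\le B)>0$. By Chebyshev, $\mb{P}(\sqrt{n}|\overline{X_n}-S|>B)\le n\,\mb{E}[(\overline{X_n}-S)^2]/B^2$, and $n\,\mb{E}[(\overline{X_n}-S)^2]=n\,\mb{V}[\overline{X_n}]+n(\mb{E}[\overline{X_n}]-S)^2$; the first summand is bounded by hypothesis, so once the full second moment is bounded by some constant $V$, choosing $B>\sqrt{V}$ gives $\liminf_{n\to\infty}\mb{P}(\sqrt{n}|\overline{X_n}-S|\le B)\ge 1-V/B^2>0$. (Alternatively, and more directly from Theorem~\ref{thm:Q}: since $Q$ is decreasing and convex on $[0,\infty)$, Jensen's inequality gives $\liminf_{n\to\infty}P_e^{(n)}=\liminf_{n\to\infty}\mb{E}[Q(D_p\sqrt{n}|\overline{X_n}-S|)]\ge\liminf_{n\to\infty}Q\bigl(D_p\sqrt{n\,\mb{E}[(\overline{X_n}-S)^2]}\bigr)\ge Q(D_p\sqrt{V})>0$.) The point that genuinely needs attention — and the main obstacle in the whole statement — is controlling the deterministic offset $n(\mb{E}[\overline{X_n}]-S)^2$ from $\mb{E}[\overline{X_n}]\to S$ alone; this is precisely the regime the theorem targets, in which the mean sits at $S$ (or within $o(1/\sqrt{n})$ of it) so that this term cannot inflate the second moment.

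For part~(\ref{part:tEVExp0}) I would reduce to Theorem~\ref{thm:tt}(\ref{part:ttExp0}) at the radius $b':=|\mu-S|-b$, which is positive since $b<|\mu-S|$. The key observation is the inclusion $\{|\overline{X_n}-S|\le b'\}\subseteq\{|\overline{X_n}-\mu|\ge b\}$: if $|\overline{X_n}-S|\le|\mu-S|-b$ then $|\overline{X_n}-\mu|\ge|\mu-S|-|\overline{X_n}-S|\ge b$. Hence the exponential-concentration hypothesis on $\mb{P}(|\overline{X_n}-\mu|\ge b)$ transfers to $\mb{P}(|\overline{X_n}-S|\le b')$ with the same rate $\alpha_b$, and Theorem~\ref{thm:tt}(\ref{part:ttExp0}) applied at radius $b'$ yields $\limsup_{n\to\infty}\frac1n\log P_e^{(n)}\le-\min\{\alpha_b,C_p (b')^2\}=-\min\{\alpha_b,C_p(|\mu-S|-b)^2\}$, as claimed. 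Apart from the caveat flagged in part~(\ref{part:tEV>0}), the remaining steps are routine.
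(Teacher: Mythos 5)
Your proposal is correct and follows essentially the same route as the paper's proof: Chebyshev's inequality converts the mean/variance hypotheses into the tail conditions of Theorem~\ref{thm:tt} for parts (a) and (b), and the event inclusion $\{|\overline{X_n}-S|\le |\mu-S|-b\}\subseteq\{|\overline{X_n}-\mu|\ge b\}$ reduces part (c) to Theorem~\ref{thm:tt}(\ref{part:ttExp0}), exactly as in the paper. The bias term $n(\mb{E}[\overline{X_n}]-S)^2$ you flag in part (b) is a genuine subtlety that the paper's own proof silently drops (it applies Chebyshev centered at $S$ as though $\mb{E}[\overline{X_n}]=S$ exactly), so if anything your write-up is the more careful one (likewise your triangle-inequality fix in part (a), where the paper centers Chebyshev at the limit $\mu$ rather than at the true mean $\mb{E}[\overline{X_n}]$).
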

\begin{proof}
For part~(\ref{part:tEV=0}), given any $B > 0$, choose $n$ large enough such that $B < \sqrt{n}|\mu - S|$, Chebyshev's inequality tells us that
\begin{align*}
\mb{P}(\sqrt{n}|\overline{X_n} - S| \le B) &\le \mb{P}(|\overline{X_n} - \mu| \ge |\mu - S| - B/\sqrt{n}) \\
&\le \frac{\mb{V}[\overline{X_n}]}{(|\mu - S| - B/\sqrt{n})^2}.
\end{align*}
Since $\lim\limits_{n\to\infty}\mb{V}[X_n] = 0$, $\lim\limits_{n\to\infty}\mb{P}(\sqrt{n}\lvert\overline{X_n} - S\rvert \le B) = 0$, and by Theorem~\ref{thm:tt} part~(\ref{part:tt=0}), $\lim\limits_{n\to\infty}P_e^{(n)}= 0$.

For part~(\ref{part:tEV>0}), Chebyshev's inequality tells us that
\begin{align*}
\mb{P}(\sqrt{n}|\overline{X_n} - S| \le B) &= 1 - \mb{P}(\sqrt{n}|\overline{X_n} - S| > B) \\
&\ge 1 - \frac{n\mb{V}[\overline{X_n}]}{B^2}.
\end{align*}
Since $\limsup\limits_{n\to\infty} n\mb{V}[\overline{X_n}] = V < \infty$, choose $B = \sqrt{2V}$, $\liminf\limits_{n\to\infty}\mb{P}(\sqrt{n}\lvert \overline{X_n} - S\rvert \le B) \ge 1/2 > 0$, and by Theorem~\ref{thm:tt} part~(\ref{part:tt>0}), $\liminf\limits_{n\to\infty}P_e^{(n)} > 0$.   
\end{proof}

For part~(\ref{part:tEVExp0}), since 
$$\mb{P}(|\overline{X_n} - S| \le |\mu - S| -b) \le \mb{P}(|\overline{X_n} - \mu| \ge  b),$$
by Theorem~\ref{thm:tt} part~(\ref{part:ttExp0}), the statement holds.

\section{Network Examples}
\label{sec:networks}

In this section, we demonstrate the calculation of the asymptotic error probability of supermajority sentiment detection for various graphs, i.e., (sequence of) adjacency matrices $A_n$, the connection strength $\beta$ and external influence level $h$. The intent is to show the application of the error analysis results of Section \ref{sec:msd}; accurate analysis using Theorems~\ref{thm:Hoeffd} and \ref{thm:Q} is possible when the graph has a strongly symmetric structure (such as Empty graph, Chain graph, etc.). However, the weaker results of Theorems~\ref{thm:tt} and \ref{thm:tEV} apply in more difficult cases, such as the 2-dimensional Lattice graph. Besides demonstrating the value of the error analysis of the previous section, we believe that it is of independent interest to obtain the asymptotic error probability in these network examples, since they show increasingly complex error  behavior.

For several graphs where the partition function can be calculated, the asymptotic distribution of $\overline{X_n}$ can be analyzed thoroughly. A major role in this analysis is played by the \textit{free entropy density}, defined as
\begin{align}
\label{eq:free_entropy_n}
\psi_n(\beta, h) &= \frac{1}{n} \log Z_n(\beta, h).
\end{align}
The mean and variance of $\overline{X_n}$ are the derivatives of the free entropy density with respect to $h$:
\begin{align}
\label{eq:E_n}
\mb{E}[\overline{X_n}] &= \frac{\partial}{\partial h}\psi_n(\beta, h) \\
\label{eq:V_n}
\mb{V}[\overline{X_n}] &= \frac{\partial^2}{n \partial h^2}\psi_n(\beta, h).
\end{align}
We are interested in the asymptotic property of free entropy density. When the limit as $n \to \infty$ exists, define it as:
\begin{align}
\label{eq:free_entropy}
\psi(\beta,h) &= \lim_{n\to\infty} \frac{1}{n} \log Z_n(\beta, h).
\end{align}
Under some regularity conditions where  the limits and the derivatives in (\ref{eq:E_n}) and (\ref{eq:V_n}) can be exchanged, the limit mean and variance can be obtained as
\begin{align}
\label{eq:EVdifferentiation}
\mu_{\beta,h} & = \lim_{n\to\infty}\mb{E}[\overline{X_n}]  = \frac{\partial}{\partial h}\psi(\beta, h) \\
\sigma^2_{\beta,h} & = \lim_{n\to\infty} n\mb{V}[\overline{X_n}] = \frac{\partial^2}{\partial h^2}\psi(\beta, h).
\end{align} 

Furthermore, two kinds of limit theorems, central limit theorem and large deviation theorem, can be established based on the free entropy density function. Firstly, when the limit mean and variance exist,  \cite{bryc1993remark} shows that under some regularity conditions, there exists a central limit theorem:
\begin{align}
\sqrt{n}(\overline{X_n} - \mu_{\beta,h}) \xrightarrow{d} N(0, \sigma^2_{\beta,h}).
\end{align}
Secondly, since $\psi(\beta, h)$ is convex in $h$, we know that its left and right derivative with respect to $h$ always exist. Denote them as $\mu^+_{\beta,h} = \frac{\partial \psi(\beta,h)}{\partial h^+}$ and $\mu^-_{\beta,h} = \frac{\partial \psi(\beta,h)}{\partial h^-}$, respectively. The interval $[\mu^-_{\beta,h}, \mu^+_{\beta,h}]$ is called the phase transition interval. A large deviation theorem \cite{touchette2009large} tells us that the probability of $\overline{X_n}$ falling outside the phase transition interval is only exponentially small. Specifically,  for any $b > 0$,
\begin{align}
\label{eq:deviation}
\limsup_{n\to\infty}&\frac{1}{n}\log \mb{P}(\overline{X_n} - \mu^+_{\beta,h} \ge b) < 0 \\
\limsup_{n\to\infty}&\frac{1}{n}\log \mb{P}(\overline{X_n} - \mu^-_{\beta,h} \le -b) < 0.
\end{align}

In the case that $\psi(\beta,h)$ is differentiable with respect to $h$, the phase transition interval $[\mu^-_{\beta,h}, \mu^+_{\beta,h}]$ shrinks to a point, i.e., the mean $\mu_{\beta,h}$. Then, the probability that $\overline{X_n}$ deviates from its mean is exponentially small. Specifically, for any $b > 0$, 
\begin{align}
\limsup_{n\to\infty}\frac{1}{n}\log \mb{P}(|\overline{X_n} - \mu_{\beta,h}| \ge b) < 0.
\end{align}
From convex analysis, we know that the $h$ where $\psi(\beta,h)$ is not differentiable is at most countable, and in fact, is finite in typical graphs. These points of non-differentiability must be treated separately.

With these results, we can now analyze the supermajority detection error performance in specific graphs. Table \ref{tab:netexamples1} lists the free entropy density for the Empty graph (i.e., i.i.d. sentiments) as well as the Star, Chain, Ring and Wheel graphs. In the Empty, Chain and Ring graphs, the free entropy density is twice differentiable for all $h$, while in the Star and Wheel graphs, it is
twice differentiable for all $h \ne 0$. For these cases, the table lists
the asymptotic mean $\mu_{\beta,h}$ and asymptotic variance $\sigma^2_{\beta,h}$. In the Star and Wheel graphs, the case $h=0$
must be  analyzed carefully, by considering the asymptotic modes $\pm \mu^+_{\beta,0}$.

\begin{table*}[t]
\centering
\vspace*{0.2cm}
\scalebox{0.95}{
    \begin{tabular}{ |c|c|c| c|c| }
     \hline
     Graph & $\psi(\beta, h)$ & $\mu_{\beta,h}, \ h \ne 0$ &
     $\mu^+_{\beta,0}$ &  $\sigma^2_{\beta,h}$   \\ 
     \hline
      \hline
      Empty & $\log(2\cosh(h))$ & $\tanh(h)$ & 0 & $\frac{1}{\cosh^2(h)}$ \\ 
      \hline
      Star & $\log(2\cosh(\beta + |h|))$ & $\tanh(\beta + |h|)\sgn(h)$ & $\tanh(\beta)$ &  $\frac{1}{\cosh^2(\beta + |h|)}$ \\ 
     \hline
      Chain/Ring & \shortstack[l]{$\beta + \log\left(\cosh(h) + \right.$ \\ $\left. \sqrt{\sinh^2(h) + \exp(-4\beta)}\right)$} & $\frac{\sinh(h)}{\sqrt{\sinh^2(h) + \exp(-4\beta)}}$ & 0 & $\frac{\exp(-4\beta)\cosh(h)}{(\sinh^2(h) + \exp(-4\beta))^{3/2}}$ \\ 
     \hline
      Wheel & \shortstack[l]{$\beta + \log\left(\cosh(
    \beta + |h|) + \right.$ \\ $\left.   \sqrt{\sinh^2(\beta + |h|) + \exp(-4\beta)}\right)$} & $\frac{\sinh(\beta + |h|)\sgn(h)}{\sqrt{\sinh^2(\beta + |h|) + \exp(-4\beta)}}$ & $\frac{\sinh(\beta)}{\sqrt{\sinh^2(\beta) + \exp(-4\beta)}}$ & $\frac{\exp(-4\beta)\cosh(\beta + |h|)}{(\sinh^2(\beta + |h|) + \exp(-4\beta))^{3/2}}$\\ 
      \hline
      \end{tabular}
  }
  \caption{Free entropy density $\psi(\beta, h)$, asymptotic mean $\mu_{\beta,h}$ (for $h \ne 0$ case), asymptotic positive mode  $\mu^+_{\beta,0}$, and asymptotic variance $\sigma^2_{\beta,h}$ in certain graphs. In the Empty, Chain and Ring graphs,  $\mu^+_{\beta,0}$
  is mean value $\mu_{\beta,0}$. In the Star and Wheel graphs, for $h=0$, $\sigma^2_{\beta,0}$
  denotes the asymptotic variance conditioned on the center vertex being positive.}
\label{tab:netexamples1}
\end{table*}

\subsection{Empty, Chain and Ring graphs}
These graphs can represent a social community with sparse ties to each other. The error probability behavior in these graphs is relatively simple.
In these graphs, Table \ref{tab:netexamples1} lists the asymptotic
mean and variance. Further, in the Empty graph, since $X_i$s are i.i.d., Hoeffding's inequality tells us that $\mb{P}(|\overline{X_n} - \mu_{\beta,h}| \ge b) \le 2 \exp(-nb^2/2)$. In the Chain graph, Hoeffding's inequality for Markov chain \cite{glynn2002hoeffding} tells that $\mb{P}(|\overline{X_n} - \mu_{\beta,h}| \ge b) \le 2 \exp(-n\alpha_b)$ for some $\alpha_b > 0$.
In the Ring graph, the same result as the Chain graph is obtained by
conditioning on one of the vertices.
Therefore by Theorem~\ref{thm:tEV} and Corollary \ref{cor:lim} we conclude that:
\begin{proposition} 
In the Empty, Chain and Ring graphs,
\begin{enumerate}[(a)] 
\item 
If $S \neq \mu_{\beta,h}$, $\lim\limits_{n\to\infty}P_e^{(n)}= 0$. Furthermore $\limsup\limits_{n\to\infty} \allowbreak \frac{1}{n}\log P_e^{(n)} < 0$. 
\item 
If $S = \mu_{\beta,h}$, $\lim\limits_{n\to\infty}P_e^{(n)} = \frac{1}{\pi}\arccot(D_p \sigma_{\beta,h}) > 0$.
\end{enumerate}
\end{proposition}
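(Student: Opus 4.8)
The plan is to reduce the proposition to direct applications of Theorem~\ref{thm:tEV}, Corollary~\ref{cor:lim}, and the moment and concentration facts already recorded for these three graphs. The starting observation is that, as noted after Table~\ref{tab:netexamples1}, the free entropy density $\psi(\beta,h)$ is twice differentiable in $h$ at every $h$ for the Empty, Chain, and Ring graphs. Granting the regularity that permits exchanging $\lim_{n\to\infty}$ with the $h$-derivatives in \eqref{eq:E_n}--\eqref{eq:V_n}, this gives $\lim_{n\to\infty}\mb{E}[\overline{X_n}] = \mu_{\beta,h} = \frac{\partial}{\partial h}\psi(\beta,h)$ and $\lim_{n\to\infty} n\mb{V}[\overline{X_n}] = \sigma^2_{\beta,h} = \frac{\partial^2}{\partial h^2}\psi(\beta,h) < \infty$; in particular $\mb{V}[\overline{X_n}] \to 0$. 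It also places us in the setting of the central limit theorem of \cite{bryc1993remark}, so $\sqrt{n}(\overline{X_n} - \mu_{\beta,h}) \xrightarrow{d} N(0, \sigma^2_{\beta,h})$. The explicit formulas for $\mu_{\beta,h}$ and $\sigma^2_{\beta,h}$ in Table~\ref{tab:netexamples1} come from differentiating the listed $\psi(\beta,h)$.

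For part~(a), assume $S \neq \mu_{\beta,h}$. Since $\mb{E}[\overline{X_n}] \to \mu_{\beta,h} \neq S$ and $\mb{V}[\overline{X_n}] \to 0$, Theorem~\ref{thm:tEV} part~(\ref{part:tEV=0}) immediately yields $\lim_{n\to\infty} P_e^{(n)} = 0$. For the exponential rate I would use the sharper concentration available in each graph: in the Empty graph, Hoeffding's inequality gives $\mb{P}(|\overline{X_n} - \mu_{\beta,h}| \ge b) \le 2\exp(-nb^2/2)$; in the Chain graph, the Markov-chain Hoeffding inequality \cite{glynn2002hoeffding} gives $\mb{P}(|\overline{X_n} - \mu_{\beta,h}| \ge b) \le 2\exp(-n\alpha_b)$ for some $\alpha_b > 0$; and in the Ring graph the same exponential bound follows after conditioning on one vertex, which reduces the ring to a Markov chain. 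Taking any $b$ with $0 < b < |\mu_{\beta,h} - S|$ and applying Theorem~\ref{thm:tEV} part~(\ref{part:tEVExp0}) then gives $\limsup_{n\to\infty} \frac{1}{n}\log P_e^{(n)} \le -\min\{\alpha_b,\, C_p(|\mu_{\beta,h}-S|-b)^2\} < 0$.

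For part~(b), assume $S = \mu_{\beta,h}$. The central limit theorem of the first paragraph then reads $\sqrt{n}(\overline{X_n} - S) \xrightarrow{d} N(0,\sigma^2_{\beta,h})$, which is precisely the hypothesis of part~(b) of Corollary~\ref{cor:lim}; that corollary gives $\lim_{n\to\infty} P_e^{(n)} = \frac{1}{\pi}\arccot(D_p\sigma_{\beta,h})$. Since $D_p\sigma_{\beta,h}$ is finite and $\arccot$ is strictly positive on $[0,\infty)$, this limit is strictly positive.

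The main obstacle is not any individual computation but the verification of the regularity hypotheses that license the steps above: interchanging $\lim_{n\to\infty}$ with $\partial/\partial h$ in \eqref{eq:E_n}--\eqref{eq:V_n} (which typically needs local uniform convergence of $\psi_n$ to $\psi$ on an $h$-neighborhood, together with convexity in $h$), and checking the conditions of \cite{bryc1993remark} and of the Markov-chain Hoeffding inequality \cite{glynn2002hoeffding} for these particular graph sequences. The Ring graph additionally requires the bookkeeping of conditioning on a single vertex to reduce it to the Chain case. Once these technical points are in place, the proposition follows formally from Theorem~\ref{thm:tEV} and Corollary~\ref{cor:lim}.
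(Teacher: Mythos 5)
Your proposal is correct and follows essentially the same route as the paper: it uses the twice-differentiability of $\psi(\beta,h)$ together with Table~\ref{tab:netexamples1}, the Hoeffding-type concentration (i.i.d.\ for Empty, Markov-chain Hoeffding for Chain, and conditioning on a vertex for Ring) with Theorem~\ref{thm:tEV} for part~(a), and the CLT of \cite{bryc1993remark} with Corollary~\ref{cor:lim}(b) for part~(b). The only difference is that you spell out the regularity caveats the paper leaves implicit, which is fine.
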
 
In particular, notice that for majority sentiment detection (i.e., $S=0$), which is
the most common application of vote polling, in the absence of an external
influencing field (so that $\mu_{\beta,0}=0$), the error probability is bounded away from zero. Thus, even if infinite users are polled, we cannot
always predict an election's result!

\subsection{Star and Wheel graphs}
The error probability behavior in these graphs is relatively simple, but
exhibits the strong influence of the center vertex.
The Star graph is composed of an Empty graph on ${\bf X}_{\setminus c}$ along with a center vertex $X_c$ connected to all the other vertices. 
The Wheel graph is composed of a Ring graph on ${\bf X}_{\setminus c}$ along with a center vertex $X_c$ connected to all the other vertices. The center vertex may represent, for example, a celebrity member of the social network. Table \ref{tab:netexamples1} lists the free entropy density in these cases
and also the asymptotic mean and variance when $h \ne 0$.

Now consider the case $h = 0$. In the Star graph, by symmetry we know that $\overline{X_n}$ is distributed as in an Empty graph with parameters $(\beta, \beta)$ or with parameters $(\beta, -\beta)$ each with one-half probability.
The former parameters result in the positive mode  $\mu^+_{\beta,0}$
shown in Table \ref{tab:netexamples1} (which is the mean
of the corresponding Empty graph), while the latter parameters
result in the negative mode $-\mu^+_{\beta,0}$. 

Similarly, for $h = 0$, in the Wheel graph, by symmetry we know that $\overline{X_n}$ is distributed as in a Ring graph with parameters $(\beta, \beta)$ or with parameters $(\beta, -\beta)$, each with one-half probability, resulting
in the modes $\pm \mu^+_{\beta,0}$
shown in Table \ref{tab:netexamples1} (which are the means of the corresponding Ring graphs). 

Thus, when $h =0$, Hoeffding's inequality tells us that in the Star and Wheel graphs, we have
$$\mb{P}(|\overline{X_n} - \mu^+_{\beta,0}| \le b) = \mb{P}(|\overline{X_n} + \mu^+_{\beta,0}| \le b) \ge \frac{1}{2} - \exp(-n\alpha_b)$$
for some $\alpha_b > 0$.

Therefore, all cases are covered by using Theorem~\ref{thm:tEV} and Corollary~\ref{cor:lim} to conclude that:
\begin{proposition}
In the Star and Wheel graphs,
\begin{itemize}
\item When $h \neq 0$:
	\begin{enumerate}[(a)]
	\item 
	If $S \neq \mu_{\beta,h}$, $\lim\limits_{n\to\infty}P_e^{(n)}= 0$. Furthermore $\limsup\limits_{n\to\infty} \allowbreak \frac{1}{n}\log P_e^{(n)} < 0$. 
	\item 
	If $S = \mu_{\beta,h}$, $\lim\limits_{n\to\infty} P_e^{(n)} = \frac{1}{\pi}\arccot(D_p \sigma_{\beta,h}) > 0$.
	\end{enumerate}
\item When $h = 0$:
	\begin{enumerate}[(a)]
	\item 
	If $S \neq \pm \mu^+_{\beta,0}$, $\lim\limits_{n\to\infty}P_e^{(n)}= 0$. Furthermore $\limsup\limits_{n\to\infty} \allowbreak \frac{1}{n}\log P_e^{(n)} < 0$. 
	\item 
	If $S = \mu^+_{\beta,0}$ or $S = -\mu^+_{\beta,0}$, $\lim\limits_{n\to\infty}P_e^{(n)} = \frac{1}{2\pi}\arccot\left(D_p \sigma_{\beta,0} \right) > 0$.
	\end{enumerate}
\end{itemize}
\end{proposition}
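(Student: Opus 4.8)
The plan is to split on whether $h=0$: for $h\ne0$ the average $\overline{X_n}$ is asymptotically concentrated at the single value $\mu_{\beta,h}$ and the results of Section~\ref{sec:msd} apply almost verbatim, whereas for $h=0$ it is asymptotically split between the two modes $\pm\mu^+_{\beta,0}$ and one must condition on the center vertex $X_c$. Consider first $h\ne0$. Since $\psi(\beta,\cdot)$ is twice differentiable at $h\ne0$ for both the Star and the Wheel, the discussion preceding the proposition supplies $\mu_{\beta,h}=\partial_h\psi(\beta,h)$, $n\mb V[\overline{X_n}]\to\sigma^2_{\beta,h}<\infty$ (hence $\mb V[\overline{X_n}]\to0$), the central limit theorem $\sqrt n(\overline{X_n}-\mu_{\beta,h})\xrightarrow{d}N(0,\sigma^2_{\beta,h})$, and, for every $b>0$, the large deviation bound $\limsup_n\frac1n\log\mb P(|\overline{X_n}-\mu_{\beta,h}|\ge b)<0$. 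If $S\ne\mu_{\beta,h}$, Theorem~\ref{thm:tEV}(a) gives $P_e^{(n)}\to0$, and Theorem~\ref{thm:tEV}(c) applied with any $b\in(0,|\mu_{\beta,h}-S|)$ together with that exponent gives $\limsup_n\frac1n\log P_e^{(n)}<0$. If $S=\mu_{\beta,h}$, then $\sqrt n(\overline{X_n}-S)\xrightarrow{d}N(0,\sigma^2_{\beta,h})$, and Corollary~\ref{cor:lim}(b) gives $\lim_n P_e^{(n)}=\frac1\pi\arccot(D_p\sigma_{\beta,h})>0$.

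Now take $h=0$ and condition on $X_c$. By the reflection symmetry $p(\mathbf x)=p(-\mathbf x)$ of (\ref{eq:isingPrior}) at $h=0$ we have $\mb P(X_c=1)=\mb P(X_c=-1)=\tfrac12$, and conditioned on $X_c=1$ the remaining vertices form an Empty graph (Star) or a Ring graph (Wheel) with effective external field $\beta$; consequently, up to the $O(1/n)$ contribution of $X_c$ itself to $\overline{X_n}$ and $\overline{Y_n}$, the conditional law of $\overline{X_n}$ agrees with that of the average in the corresponding field-$\beta$ sub-graph, so that both the CLT $\sqrt n(\overline{X_n}-\mu^+_{\beta,0})\xrightarrow{d}N(0,\sigma^2_{\beta,0})$ and a Hoeffding-type bound $\mb P(|\overline{X_n}-\mu^+_{\beta,0}|\ge a)\le 2e^{-n\gamma_a}$ hold; the branch $X_c=-1$ is the mirror image around $-\mu^+_{\beta,0}$, and since $\beta>0$ we have $\mu^+_{\beta,0}>0$, so the two modes are distinct. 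For part (a), $S\notin\{\mu^+_{\beta,0},-\mu^+_{\beta,0}\}$, so for $b>0$ small enough that the balls centered at $S$, $\mu^+_{\beta,0}$, $-\mu^+_{\beta,0}$ of radius $b$ are pairwise disjoint, the stated lower bounds $\mb P(|\overline{X_n}-\mu^+_{\beta,0}|\le b)\ge\tfrac12-e^{-n\alpha_b}$ and $\mb P(|\overline{X_n}+\mu^+_{\beta,0}|\le b)\ge\tfrac12-e^{-n\alpha_b}$ force $\mb P(|\overline{X_n}-S|\le b)\le 2e^{-n\alpha_b}$; Theorem~\ref{thm:tt}(c) then yields $\limsup_n\frac1n\log P_e^{(n)}\le-\min\{\alpha_b,C_pb^2\}<0$, and in particular $P_e^{(n)}\to0$.

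For part (b) with $h=0$, take $S=\mu^+_{\beta,0}$ (the case $S=-\mu^+_{\beta,0}$ is symmetric). By Theorem~\ref{thm:Q}, $\lim_n P_e^{(n)}$ exists and equals $\lim_n\mb E[Q(D_p\sqrt n|\overline{X_n}-S|)]$ once the latter limit is shown to exist, and conditioning on $X_c$ writes it as $\tfrac12\lim_n\mb E[Q(D_p\sqrt n|\overline{X_n}-S|)\mid X_c=1]+\tfrac12\lim_n\mb E[Q(D_p\sqrt n|\overline{X_n}-S|)\mid X_c=-1]$. On the branch $X_c=1$, $\sqrt n(\overline{X_n}-S)\xrightarrow{d}N(0,\sigma^2_{\beta,0})$, so Corollary~\ref{cor:lim}(b) applied to this conditional law gives the first conditional expectation $\to\frac1\pi\arccot(D_p\sigma_{\beta,0})$; on the branch $X_c=-1$, $\overline{X_n}\to-\mu^+_{\beta,0}$ in probability while $S=\mu^+_{\beta,0}>0$, so $\sqrt n|\overline{X_n}-S|\to\infty$ in probability and, $Q$ being bounded with $Q(+\infty)=0$, the second conditional expectation $\to0$. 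Adding the two contributions, $\lim_n P_e^{(n)}=\frac1{2\pi}\arccot(D_p\sigma_{\beta,0})>0$.

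The main obstacle is the $h=0$ analysis. On the technical side, one must justify that passing from the conditioned field-$\beta$ sub-graph on $n-1$ vertices to $\overline{X_n}$ and $\overline{Y_n}$ on all $n$ vertices --- an $O(1/n)$ perturbation coming from $X_c$ --- disturbs neither the conditional CLT nor the conditional exponential deviation bound; this is routine but must be stated. Conceptually, the crux is the $\tfrac12$--$\tfrac12$ mixture over the two symmetric branches: in part (b), when $S=\mu^+_{\beta,0}$ it is exactly the branch whose conditional mean equals $S$ that contributes the positive amount $\frac1\pi\arccot(D_p\sigma_{\beta,0})$, while the other branch is asymptotically error-free, and this is precisely what halves the error probability relative to the $h\ne0$ formula. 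The $h\ne0$ case, by contrast, is essentially bookkeeping on top of Theorems~\ref{thm:tEV} and \ref{thm:Q} and Corollary~\ref{cor:lim}.
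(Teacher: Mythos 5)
Your proposal is correct and follows essentially the same route as the paper: for $h\neq 0$ it invokes the free-entropy machinery (differentiability, CLT, large deviations) through Theorem~\ref{thm:tEV} and Corollary~\ref{cor:lim}, and for $h=0$ it conditions on the center vertex to reduce to an Empty/Ring graph with effective field $\pm\beta$, yielding the half--half mixture of modes $\pm\mu^+_{\beta,0}$ and the factor $\tfrac12$ in the $\arccot$ formula. You simply spell out the mixture and $O(1/n)$ perturbation details that the paper's very terse argument leaves implicit.
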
 
Note that in the $h=0$ case, the critical value for the accuracy of supermajority detection in these graphs
shifts to $\pm \mu^+_{\beta,0} \ne 0$, unlike the graphs in the previous proposition
where the critical value was at $\mu^+_{\beta,0} = 0$. This is due to the strong influence
of the center vertex which is connected to all the other vertices.

\subsection{Complete graph}
The Complete graph, representing a close-knit social community, is possibly the simplest graph that demonstrates
a phase transition behavior, where the strength of the connection
$\beta$ strongly affects the accuracy of supermajority sentiment detection.
In a Complete graph, the corresponding Curie-Weiss \cite{kochmanski2013curie} prior is defined slightly differently, in that the strength is weakened to $\beta/n$, to ensure that the total strength from all neighbors of a vertex remains constant, i.e., does not grow with  number of neighbors $n-1$. With this standard modification, the prior on the sentiments is
\begin{align}
\label{eq:CWprior}
p({\bf x}) = \frac{\exp\left(\frac{\beta}{2n}({\bf 1}^T{\bf x})^2 + h {\bf 1}^T {\bf x}\right)}{Z_n(\beta, h)}.
\end{align}

The free entropy density is 
\begin{align}
\label{eq:CWfree_entropy}
\psi(\beta,h) = \max_{\mu \in [-1,1]} h\mu + \frac{1}{2}\beta \mu^2 + H\left(\frac{1+\mu}{2}\right),
\end{align}
where $H(x) = -x\log x - (1-x)\log(1-x)$ denotes the binary entropy function. If either $h \neq 0$ or $\beta \le 1$  the maximizing value $\mu$ in
(\ref{eq:CWfree_entropy}) is unique, denoted as $\mu_{\beta,h}$
(which can be obtained by numerical optimization). In particular,  $\mu_{\beta,0}=0$ in this case. 

When $h = 0$ and $\beta > 1$, there are two maximizing values which are symmetric around $0$, with the positive value being denoted as $\mu^+_{\beta,0}\ne 0$. From results in \cite{dembo2010gibbs}, if either $h \neq 0$ or $\beta \le 1$, there exists $\alpha_b > 0$, such that for all $n$ large enough, 
$$\mb{P}(\lvert \overline{X_n} - \mu_{\beta,h} \rvert \le b) \ge 1 - \exp(-n\alpha_b).$$
In contrast, if $\beta > 1$ and $h = 0$, there exists $\alpha_b > 0$, such that for all $n$ large enough, 
$$\mb{P}(\lvert \overline{X_n} - \mu^+_{\beta,0} \rvert \le b) = \mb{P}(\lvert \overline{X_n} + \mu^+_{\beta,0}| \le b)\ge \frac{1}{2} - \exp(-n\alpha_b).$$

In all cases, we can calculate that $$\frac{\partial^2 }{\partial h^2}\psi(\beta, h) = \sigma^2_{\beta,h} = \frac{1 - \mu^2_{\beta,h}}{1 - \beta + \beta \mu^2_{\beta,h}}.$$
Therefore by Theorem~\ref{thm:tEV} and Corollary~\ref{cor:lim}  we conclude that:
\begin{proposition}
In a Complete graph with Curie-Weiss prior,
\begin{itemize}
\item When $h \neq 0$ or $\beta \le 1$:
	\begin{enumerate}[(a)]
	\item 
	If $S \neq \mu_{\beta,h}$, $\lim\limits_{n\to\infty}P_e^{(n)}= 0$. Furthermore $\limsup\limits_{n\to\infty} \allowbreak \frac{1}{n}\log P_e^{(n)} < 0$. 
	\item 
	If $S = \mu_{\beta,h}$, $\lim\limits_{n\to\infty}P_e^{(n)} = \frac{1}{\pi}\arccot(D_p\sigma_{\beta,h}) > 0$.
	\end{enumerate}
\item When $h = 0$ and $\beta > 1$:
	\begin{enumerate}[(a)]
	\item 
	If $S \neq \pm\mu^+_{\beta,0}$, $\lim\limits_{n\to\infty}P_e^{(n)}= 0$. Furthermore $\limsup\limits_{n\to\infty} \allowbreak \frac{1}{n}\log P_e^{(n)} < 0$. 
	\item 
	If $S = \mu^+_{\beta,0}$ or $S = -\mu^+_{\beta,0}$, $\lim\limits_{n\to\infty}P_e^{(n)} = \frac{1}{2\pi}\arccot(D_p\sigma_{\beta,0}) > 0$.
	\end{enumerate}
\end{itemize}
\end{proposition}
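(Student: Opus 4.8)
The plan is to treat two regimes separately. In the \emph{unimodal} regime ($h\neq 0$ or $\beta\le 1$) the free entropy $\psi(\beta,h)$ has a unique maximizer $\mu_{\beta,h}$ and $\overline{X_n}$ concentrates there; in the \emph{bimodal} regime ($h=0$, $\beta>1$) the mass of $\overline{X_n}$ splits, with probability tending to $\frac{1}{2}$ each, around $+\mu^+_{\beta,0}$ and $-\mu^+_{\beta,0}$. The inputs are the quoted large-deviation estimate from \cite{dembo2010gibbs}, the central limit theorem from \cite{bryc1993remark}, and the closed form $\sigma^2_{\beta,h}=(1-\mu^2_{\beta,h})/(1-\beta+\beta\mu^2_{\beta,h})$; the error statements then follow from Theorems~\ref{thm:tEV}, \ref{thm:tt}, \ref{thm:Q} and Corollary~\ref{cor:lim}.

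In the unimodal regime, $\lim_n n\mb{V}[\overline{X_n}]=\sigma^2_{\beta,h}<\infty$ gives $\lim_n\mb{V}[\overline{X_n}]=0$. If $S\neq\mu_{\beta,h}$, Theorem~\ref{thm:tEV}(\ref{part:tEV=0}) yields $\lim_n P_e^{(n)}=0$; moreover, choosing $0<b<|\mu_{\beta,h}-S|$ and feeding the exponential concentration $\mb{P}(|\overline{X_n}-\mu_{\beta,h}|\ge b)\le\exp(-n\alpha_b)$ into Theorem~\ref{thm:tEV}(\ref{part:tEVExp0}) gives $\limsup_n\frac{1}{n}\log P_e^{(n)}\le-\min\{\alpha_b,\,C_p(|\mu_{\beta,h}-S|-b)^2\}<0$. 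If $S=\mu_{\beta,h}$, then $\sqrt{n}(\overline{X_n}-\mu_{\beta,h})\xrightarrow{d}N(0,\sigma^2_{\beta,h})$ reads $\sqrt{n}(\overline{X_n}-S)\xrightarrow{d}N(0,\sigma^2_{\beta,h})$, so Corollary~\ref{cor:lim}(b) gives $\lim_n P_e^{(n)}=\frac{1}{\pi}\arccot(D_p\sigma_{\beta,h})$, which is strictly positive because at a strict maximizer the second-order optimality condition forces $1-\beta+\beta\mu^2_{\beta,h}>0$, hence $\sigma_{\beta,h}\in(0,\infty)$.

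In the bimodal regime, write $E_n^+=\{\overline{X_n}>0\}$ and $E_n^-=\{\overline{X_n}<0\}$. For $b$ small the balls of radius $b$ around $\pm\mu^+_{\beta,0}$ lie in $E_n^+$ and $E_n^-$ respectively, so the quoted bounds $\mb{P}(|\overline{X_n}\mp\mu^+_{\beta,0}|\le b)\ge\frac{1}{2}-\exp(-n\alpha_b)$ give $\mb{P}(E_n^\pm)\to\frac{1}{2}$ together with the conditional concentration $\mb{P}(|\overline{X_n}\mp\mu^+_{\beta,0}|\le b\mid E_n^\pm)\to 1$. If $S\neq\pm\mu^+_{\beta,0}$, pick $b$ so small that the ball of radius $b$ around $S$ is disjoint from the balls around $\pm\mu^+_{\beta,0}$; then $\mb{P}(|\overline{X_n}-S|\le b)\le 2\exp(-n\alpha_b)$, and Theorem~\ref{thm:tt}(\ref{part:ttExp0}) gives $\limsup_n\frac{1}{n}\log P_e^{(n)}\le-\min\{\alpha_b,\,C_pb^2\}<0$, in particular $\lim_n P_e^{(n)}=0$. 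If $S=\mu^+_{\beta,0}$ (the case $S=-\mu^+_{\beta,0}$ is symmetric), then on $E_n^-$ the quantity $|\overline{X_n}-S|$ is, up to exponentially small probability, near $2\mu^+_{\beta,0}>0$, so $\mb{E}[Q(D_p\sqrt{n}|\overline{X_n}-S|)\,\mathbf{1}_{E_n^-}]\to 0$; on $E_n^+$ a conditional central limit theorem gives $\sqrt{n}(\overline{X_n}-S)\mid E_n^+\xrightarrow{d}N(0,\sigma^2_{\beta,0})$, whence $\mb{E}[Q(D_p\sqrt{n}|\overline{X_n}-S|)\mid E_n^+]\to\frac{1}{\pi}\arccot(D_p\sigma_{\beta,0})$ by the integral computation in Corollary~\ref{cor:lim}(b). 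Adding the two pieces and invoking Theorem~\ref{thm:Q} yields $\lim_n P_e^{(n)}=\frac{1}{2\pi}\arccot(D_p\sigma_{\beta,0})$, positive since $1-\beta+\beta(\mu^+_{\beta,0})^2>0$ at the nontrivial maximizer. Note Corollary~\ref{cor:lim}(a) cannot be invoked directly here, because on $E_n^-$ the variable $\sqrt{n}(\overline{X_n}-S)$ escapes to $-\infty$; it is the $E_n^\pm$ decomposition inside Theorem~\ref{thm:Q} that accommodates this.

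The main obstacle is the conditional central limit theorem for $\overline{X_n}\mid E_n^+$ in the bimodal regime, since the cited results \cite{bryc1993remark,dembo2010gibbs} are stated for the regular (unimodal) case. I would handle this by showing that the Curie-Weiss measure restricted to $\{\overline{X_n}>0\}$ has a free-energy landscape with a single well at $\mu^+_{\beta,0}$ (the contribution of the other well being exponentially suppressed), so that the standard large-deviation and Gaussian-fluctuation analysis applies to the restricted measure; making this restriction argument rigorous is the crux of the bimodal case. One further caveat: the critical point $(\beta,h)=(1,0)$ should be excluded from part~(b) of the first bullet, since there $\sigma^2_{1,0}=\infty$, the $\sqrt{n}$ scaling is replaced by a non-Gaussian $n^{1/4}$ limit, and consequently $\sqrt{n}|\overline{X_n}-S|\to\infty$ in probability, so that Theorem~\ref{thm:tt}(\ref{part:tt=0}) actually gives $P_e^{(n)}\to 0$ there rather than a positive limit.
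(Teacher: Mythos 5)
Your proposal is correct and follows essentially the same route as the paper: the paper also deduces the proposition directly from the quoted concentration facts together with Theorem~\ref{thm:tEV}, Theorem~\ref{thm:tt} and Corollary~\ref{cor:lim}, handling the $h=0$, $\beta>1$ case by splitting the mass equally between the two modes $\pm\mu^+_{\beta,0}$ exactly as in your $E_n^\pm$ decomposition (mirroring the Star/Wheel treatment). Your two caveats are well taken and go beyond the paper's terse argument: the conditional central limit theorem on the positive phase is indeed used implicitly rather than proved (it is the standard Ellis--Newman-type result for Curie--Weiss conditioned on a phase), and the critical point $(\beta,h)=(1,0)$ with $S=0$ is a genuine exception to the first bullet's part (b), since there $\sigma^2_{1,0}=\infty$, the fluctuations are of order $n^{-1/4}$, and $\sqrt{n}\lvert\overline{X_n}-S\rvert\to\infty$ forces $P_e^{(n)}\to 0$ by Theorem~\ref{thm:tt}(\ref{part:tt=0}), so that point should be excluded or treated separately in the statement.
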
 
Unlike the graphs in the previous propositions, notice that the Complete graph demonstrates a phase transition-like behavior in error performance when $S=h=0$; the error performance switches from asymptotically accurate to inaccurate when
the connection  strength $\beta$ drops below the critical value $\beta_c=1$.

\subsection{Lattice graph} 
The $2$-dimensional Lattice graph is historically interesting because
it is the first graph possessing finite degree that was demonstrated to exhibit
a phase transition behavior, thus validating its use as a model to explain ferromagnetism.
In the Lattice graph, \cite{friedli_velenik_2017} provides a detailed analysis of the  free entropy density. When $h \neq 0$ or $\beta < \beta_c = \frac{1}{2}\log(1+\sqrt{2})$, the free entropy density is  differentiable with respect to $h$, so that
the asymptotic mean $\mu_{\beta,h}$ can be obtained by differentiation 
(\ref{eq:EVdifferentiation}). In particular,
$\mu_{\beta,0}=0$ when $\beta < \beta_c$. In contrast, when $h = 0$ and $\beta > \beta_c$, the free entropy density is not differentiable, while the asymptotic properties of $\overline{X_n}$ in the phase transition interval cannot be determined only through the free entropy density. However, statistical physicists typically analyze the Lattice under the positive boundary condition (where the boundary vertices $i$ are all clamped to $X_i=+1$). In that case, a celebrated result in statistical physics shows that
$$\lim_{n\to\infty}\mb{E}[\overline{X_n}] = \mu^+_{\beta,0} = \left. \frac{\partial }{\partial h^+} \psi(\beta, h) \right|_{h=0} > 0.$$
Thus, when $\beta > \beta_c$, the asymptotic mean is positive even without any external influence.
Furthermore, when $\beta$ is sufficiently larger than $\beta_c$, there is a covariance decay result in the Lattice graph stating that $\limsup_{n\to\infty} n \mb{V}[\overline{X_n}] < \infty$. Thus, by Theorem~\ref{thm:tEV}, we have the following results:

\begin{proposition}
In a  2-dimensional Lattice graph, 
\begin{itemize}
\item When $h \neq 0$ or $\beta < \beta_c = \frac{1}{2}\log(1+\sqrt{2})$: 
	\begin{enumerate}[(a)]
	\item 
	If $S \neq \mu_{\beta,h}$, $\lim\limits_{n\to\infty}P_e^{(n)}= 0$. Furthermore $\limsup\limits_{n\to\infty} \allowbreak \frac{1}{n}\log P_e^{(n)} < 0$. 
	\item 
	If $S = \mu_{\beta,h}$, $\liminf\limits_{n\to\infty}P_e^{(n)} > 0$.
	\end{enumerate}
\item When $h = 0$, $\beta$ is sufficiently larger than $\beta_c$, and under the positive boundary condition:
	\begin{enumerate}[(a)]
	\item 
	If $S \neq \mu^+_{\beta,0}>0$, $\lim\limits_{n\to\infty}P_e^{(n)}= 0$. 
	\item 
	If $S = \mu^+_{\beta,0}$, $\liminf\limits_{n\to\infty}P_e^{(n)} > 0$.
	\end{enumerate}
\end{itemize}
 (The same result holds for the negative boundary condition at supermajority threshold level
 $-\mu^+_{\beta,0}$.)
\end{proposition}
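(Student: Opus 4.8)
The plan is to reduce the entire statement to applications of Theorem~\ref{thm:tEV}: that theorem converts control of $\lim_{n\to\infty}\mb{E}[\overline{X_n}]$ and of $n\mb{V}[\overline{X_n}]$ (together with a large-deviation bound, when the exponential rate is wanted) directly into statements about $P_e^{(n)}$. So the probabilistic part is bookkeeping, and all the real content is importing the statistical-physics facts about the two-dimensional Ising model that are already quoted in the text: differentiability (indeed analyticity) of $\psi(\beta,h)$ in $h$ throughout $\{h\neq 0\}\cup\{\beta<\beta_c\}$, the large-deviation estimate (\ref{eq:deviation}) saying $\overline{X_n}$ is exponentially unlikely to leave a neighborhood of the (here degenerate) phase-transition interval, and --- for $h=0$, $\beta$ well above $\beta_c$, under the positive boundary condition --- the facts $\lim_{n\to\infty}\mb{E}[\overline{X_n}]=\mu^+_{\beta,0}>0$ and the covariance-decay bound $\limsup_{n\to\infty} n\mb{V}[\overline{X_n}]<\infty$.

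\textbf{Case $h\neq 0$ or $\beta<\beta_c$.} Here $\psi(\beta,h)$ is twice differentiable in $h$ (\cite{friedli_velenik_2017}), and under the regularity conditions permitting the exchange in (\ref{eq:EVdifferentiation}) we get $\lim_{n\to\infty}\mb{E}[\overline{X_n}]=\mu_{\beta,h}=\frac{\partial}{\partial h}\psi(\beta,h)$ and $\limsup_{n\to\infty} n\mb{V}[\overline{X_n}]=\sigma^2_{\beta,h}=\frac{\partial^2}{\partial h^2}\psi(\beta,h)<\infty$. Since $\psi$ is differentiable in $h$, the phase-transition interval collapses to the point $\mu_{\beta,h}$, and (\ref{eq:deviation}) degenerates to: for every $b>0$, $\limsup_{n\to\infty}\frac{1}{n}\log\mb{P}(|\overline{X_n}-\mu_{\beta,h}|\ge b)<0$. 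If $S\neq\mu_{\beta,h}$, pick any $b\in(0,|\mu_{\beta,h}-S|)$; this exponential bound supplies the hypothesis of Theorem~\ref{thm:tEV}(\ref{part:tEVExp0}) with $\mu=\mu_{\beta,h}$, giving $\limsup_{n\to\infty}\frac{1}{n}\log P_e^{(n)}<0$, hence $\lim_{n\to\infty}P_e^{(n)}=0$. If $S=\mu_{\beta,h}$, apply Theorem~\ref{thm:tEV}(\ref{part:tEV>0}) with $\lim_{n\to\infty}\mb{E}[\overline{X_n}]=S$ and $\limsup_{n\to\infty} n\mb{V}[\overline{X_n}]=\sigma^2_{\beta,h}<\infty$ to get $\liminf_{n\to\infty}P_e^{(n)}>0$. (We do not assert the exact value $\frac{1}{\pi}\arccot(D_p\sigma_{\beta,h})$ here, since we invoke only the second-moment bound and not a central limit theorem for $\overline{X_n}$ in the Lattice; part (b) of Corollary~\ref{cor:lim} would supply it if such a CLT were available.)

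\textbf{Case $h=0$, $\beta$ sufficiently larger than $\beta_c$, positive boundary condition.} The quoted results give $\lim_{n\to\infty}\mb{E}[\overline{X_n}]=\mu^+_{\beta,0}>0$ and the covariance-decay bound $\limsup_{n\to\infty} n\mb{V}[\overline{X_n}]<\infty$, whence $\mb{V}[\overline{X_n}]\to 0$. If $S\neq\mu^+_{\beta,0}$, Theorem~\ref{thm:tEV}(\ref{part:tEV=0}) (with $\mu=\mu^+_{\beta,0}\neq S$ and $\mb{V}[\overline{X_n}]\to 0$) yields $\lim_{n\to\infty}P_e^{(n)}=0$; if $S=\mu^+_{\beta,0}$, Theorem~\ref{thm:tEV}(\ref{part:tEV>0}) yields $\liminf_{n\to\infty}P_e^{(n)}>0$. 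For the negative boundary condition, the spin-flip map $X_i\mapsto -X_i$ is a symmetry of the model at $h=0$ that carries the positive boundary condition to the negative one and sends $\overline{X_n}$ to $-\overline{X_n}$; hence under the negative boundary condition $\overline{X_n}$ has the law of $-\overline{X_n}$ under the positive one, so $\lim_{n\to\infty}\mb{E}[\overline{X_n}]=-\mu^+_{\beta,0}$ with the same variance bound, and the preceding two sentences apply verbatim with $S$ compared to $-\mu^+_{\beta,0}$.

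\textbf{Main obstacle.} The difficulty is not in the above argument, which is a direct application of Theorem~\ref{thm:tEV}, but in justifying the two statistical-physics inputs that we take from the cited literature: (i) analyticity of $\psi(\beta,h)$ in $h$ on all of $\{h\neq 0\}\cup\{\beta<\beta_c\}$ together with the validity of differentiating under the $n\to\infty$ limit to identify $\mu_{\beta,h}$ and $\sigma^2_{\beta,h}$, plus the exponential large-deviation bound outside the (degenerate) phase-transition interval; and (ii) the covariance-decay estimate $\limsup_{n\to\infty} n\mb{V}[\overline{X_n}]<\infty$ for $\beta$ well above $\beta_c$ under the $\pm$ boundary conditions, which rests on low-temperature cluster-expansion machinery --- and is precisely why the second bullet of the proposition is stated for "$\beta$ sufficiently larger than $\beta_c$" rather than for all $\beta>\beta_c$.
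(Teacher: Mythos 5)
Your proposal is correct and follows essentially the same route as the paper: it imports the same statistical-physics facts (differentiability of $\psi(\beta,h)$ for $h\neq 0$ or $\beta<\beta_c$ with the resulting degenerate large-deviation bound, the positive-boundary magnetization $\mu^+_{\beta,0}>0$, and the covariance-decay bound for $\beta$ sufficiently above $\beta_c$) and feeds them into Theorem~\ref{thm:tEV} parts (\ref{part:tEV=0}), (\ref{part:tEV>0}), (\ref{part:tEVExp0}), exactly as the paper does. Your explicit spin-flip argument for the negative boundary condition and your remark on why no exact $\arccot$ limit is claimed are consistent elaborations of what the paper leaves implicit.
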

Notice that the (finite degree) Lattice graph also demonstrates a subtle error performance behavior, similar to the (infinite degree) Complete graph. When $S=h=0$,
the detection is asymptotically accurate when $\beta$ is sufficiently large, but is inaccurate when
$\beta < \beta_c$. The social network underlying the sentiments matters significantly in this case!

For all the graph examples discussed above, we can make the following general observations.
\begin{itemize}
\item When $S =0 $ and $h \neq 0$, the detection error probability always decays to $0$, i.e., the majority sentiment detection in those networks  is always asymptotically accurate if
there is a non-zero external influence.
\item When $h =0 $ and $S \neq 0$, the detection error probability always decays to $0$, except perhaps for a few (unlucky) choices of $S$, i.e., strict supermajority sentiment detection in those networks  is nearly always asymptotically accurate if
there is no external influence.
\end{itemize}

\section{Numerical Results}
\label{sec:num}
In this section, we provide numerical results on the asymptotic behavior of average member sentiments  $\overline{X_n}$ and on supermajority sentiment detection performance. 

\begin{figure}[ht]
  \centering
  \vspace*{-0.3cm}
  \includegraphics[width=0.85\linewidth]{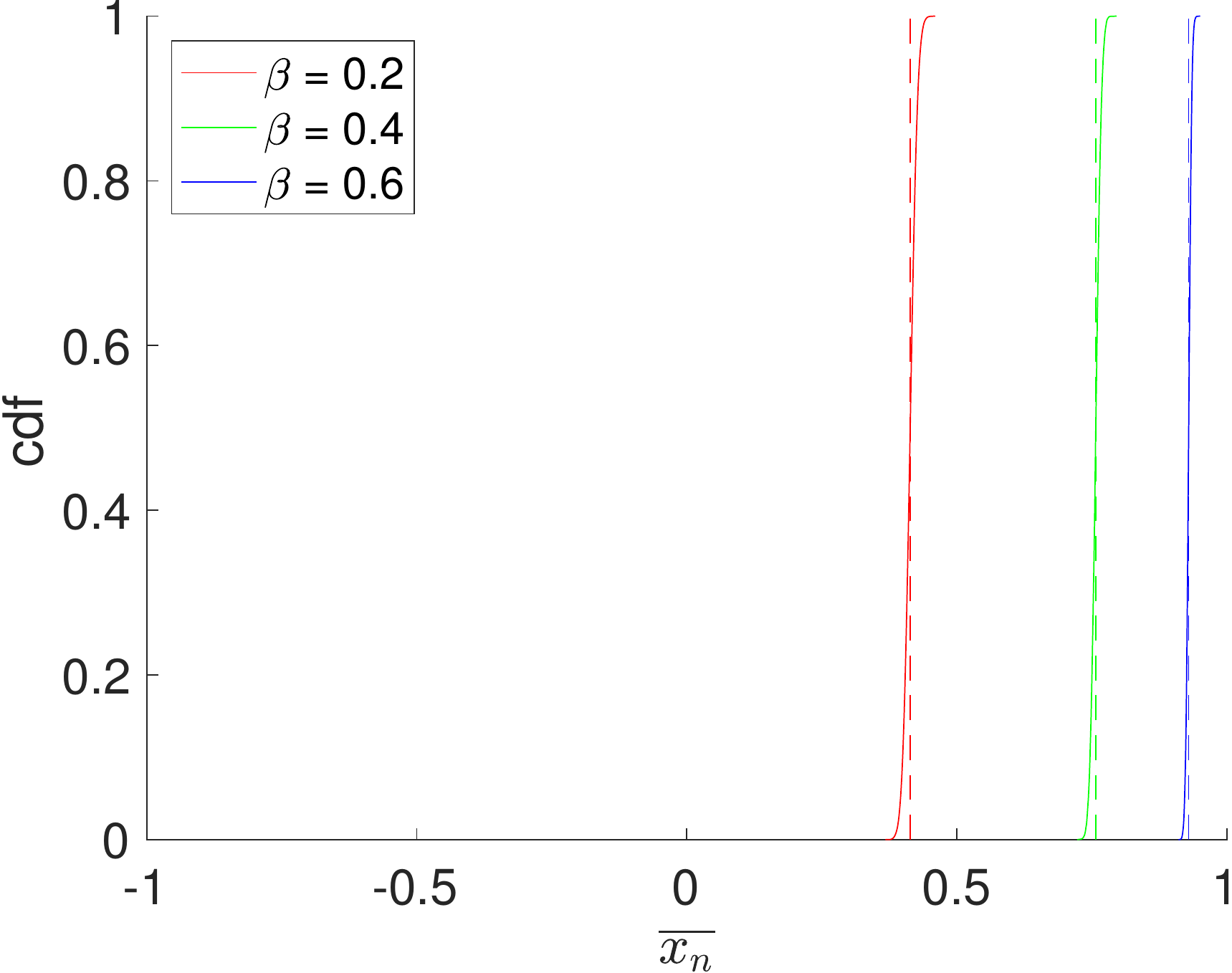}
  \vspace*{-0.2cm}    
  \caption{Wheel graph - part (a): cdf of $\overline{X_n}$ for $n = 10001$, $\beta = \{0.2, 0.4, 0.6\}$, $h = 0.1$, $S = \mu_{\beta,h}$, $p=0.3$. Dashed lines denote $\mu_{\beta,h}$. }
  \label{fig:cdf_wheel_a}
  \vspace*{-0.7cm}
\end{figure}
\begin{figure}[ht]
  \centering
  \includegraphics[width=0.85\linewidth]{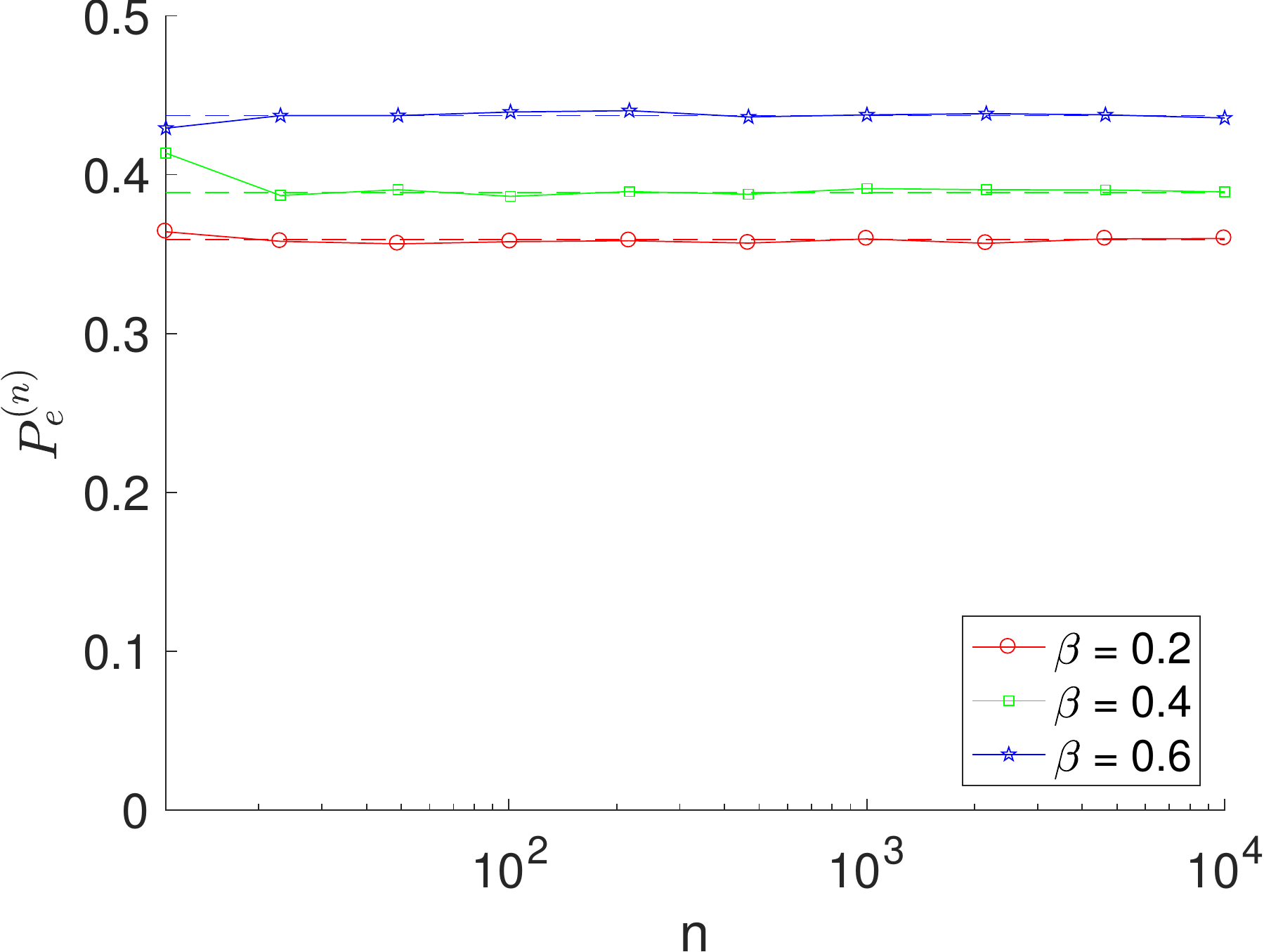}  
  \vspace*{-0.3cm}
  \caption{Wheel graph  - part (a): $P_e^{(n)}$ versus $n$ for $\beta = \{0.2, 0.4, 0.6$\}, $h = 0.1$, $S = \mu_{\beta,h}$, $p=0.3$. Dashed lines denote $\frac{1}{\pi}\arccot(D_p\sigma_{\beta,h})$.}
  \label{fig:error_n_wheel_a}
  \vspace*{-0.3cm}
\end{figure}
First, we consider the Wheel graph. In part (a), we set $h=0.1$ and $S = \mu_{\beta, h} \ne 0$. The cumulative density function (cdf) is shown in Fig.~\ref{fig:cdf_wheel_a}. The detection error probability versus number of members $n$ is shown in Fig.~\ref{fig:error_n_wheel_a}. We can observe that, the distribution of $\overline{X_n}$ concentrates around its mean $\mu_{\beta,h}$. Consequently, the detection error probability does not decay to $0$, but instead converges to $\frac{1}{\pi}\arccot(D_p\sigma_{\beta,h})>0$, as predicted by the results in Section \ref{sec:networks}. 

For the Wheel graph, in part (b), we set $h=S = 0$. The cdf of $\overline{X_n}$ is shown in Fig.~\ref{fig:cdf_wheel_b}. The detection error probability versus $n$ is shown in Fig.~\ref{fig:error_n_wheel_b}. We can observe that the distribution of $\overline{X_n}$ concentrates around the two modes $\pm\mu^+_{\beta,h} \ne 0$, due to the strong influence of the center vertex. However, since the supermajority threshold level $S$ is not equal to one of these modes, the detection error probability reduces to $0$ exponentially fast. 
\begin{figure}[ht]
  \centering
  \includegraphics[width=0.85\linewidth]{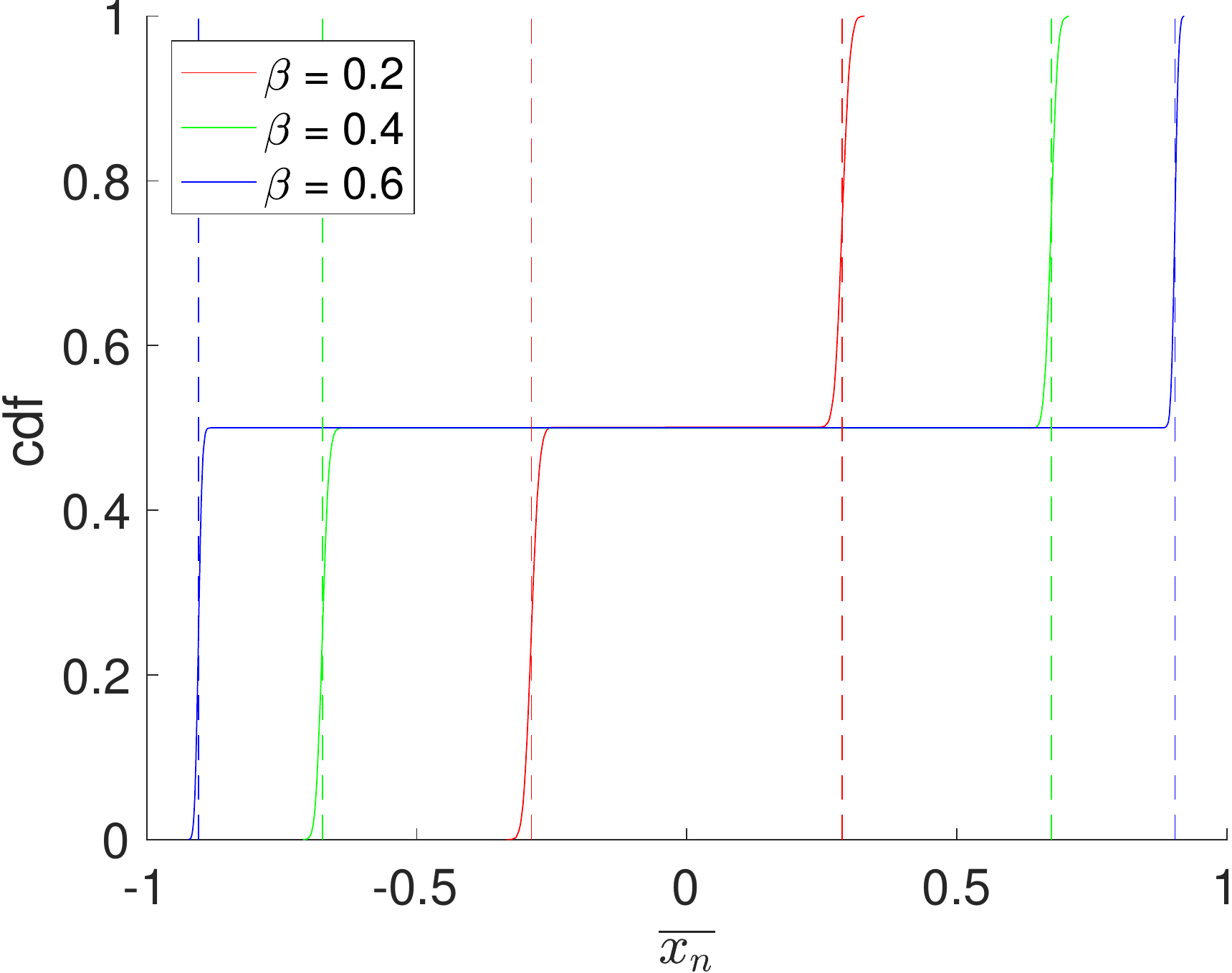}  
  \vspace*{-0.2cm}
  \caption{Wheel graph  - part (b): cdf of $\overline{X_n}$ for $n = 10001$, $\beta = \{0.2, 0.4, 0.6\}$, $h = 0$, $S = 0$, $p=0.3$. Dashed lines denote $\pm\mu^+_{\beta,0}$. }
  \label{fig:cdf_wheel_b}
  \vspace*{-0.3cm}
\end{figure}
\begin{figure}[ht]
  \centering
  \includegraphics[width=0.85\linewidth]{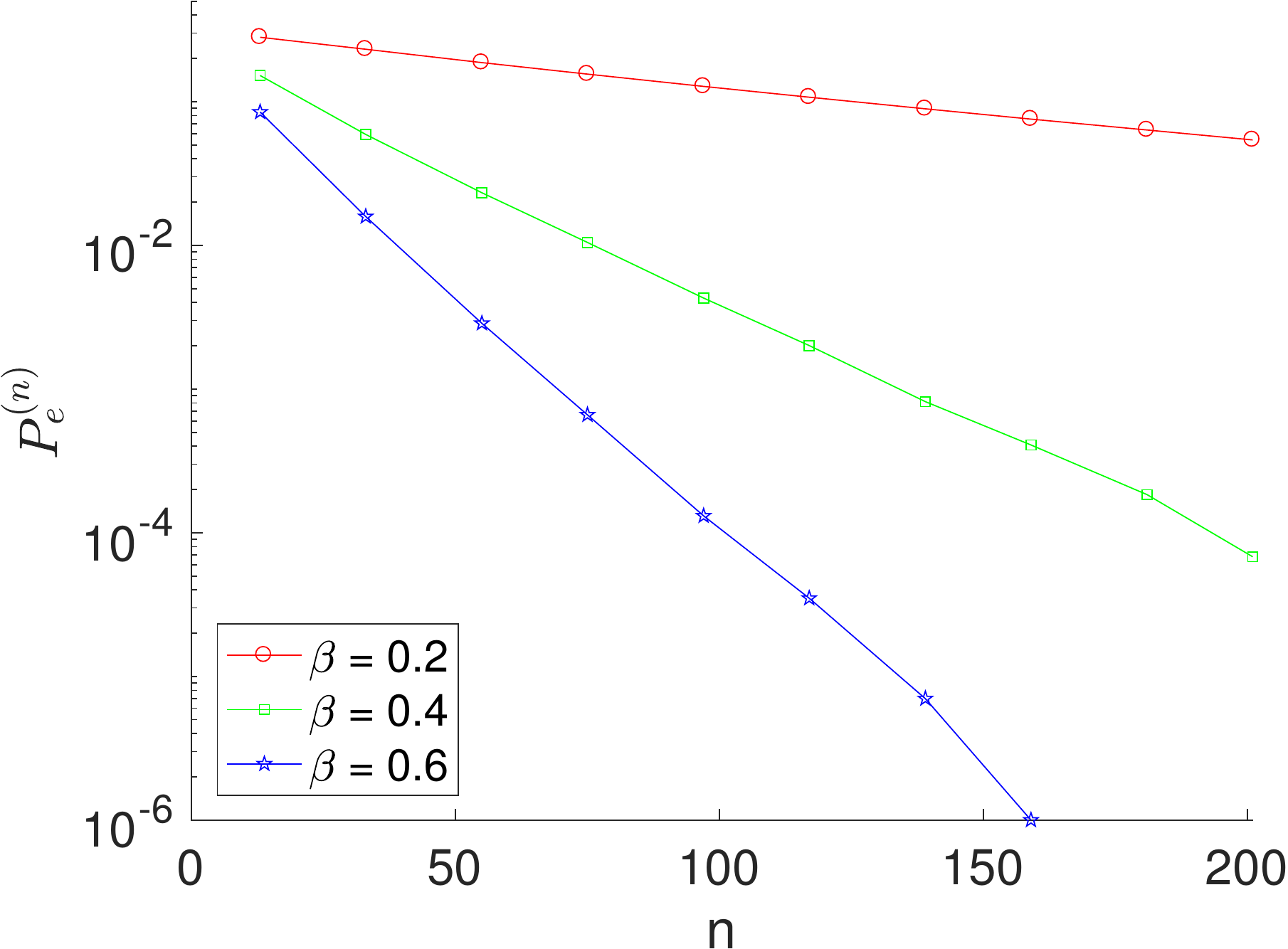} 
  \vspace*{-0.3cm}
  \caption{Wheel graph  - part (b): $P_e^{(n)}$ versus $n$ for $\beta = \{0.2, 0.4, 0.6\}$, $h = 0$, $S = 0$, $p=0.3$. }
  \label{fig:error_n_wheel_b}
  \vspace*{-0.6cm}
\end{figure}

Finally, we consider the Lattice graph for $h=S=0$. The cdf of $\overline{X_n}$ is shown in Fig.~\ref{fig:cdf_lattice}. The detection error probability versus connection strength $\beta$ is shown in Fig.~\ref{fig:error_beta_lattice}. We can observe that, when $\beta < \frac{1}{2}\log(1+\sqrt{2})$, the distribution of $\overline{X_n}$ concentrates around zero, so that the error probability is large (and is predicted in Section \ref{sec:networks} to remain nonzero, even for large $n$.) When $\beta > \frac{1}{2}\log(1+\sqrt{2})$, the distribution concentrates around the two modes which are symmetric around zero. Consequently, the error probability in this regime is small (and is predicted to reduce to zero as  $n\rightarrow \infty$.) 

\begin{figure}[ht]
  \centering
  \includegraphics[width=0.85\linewidth]{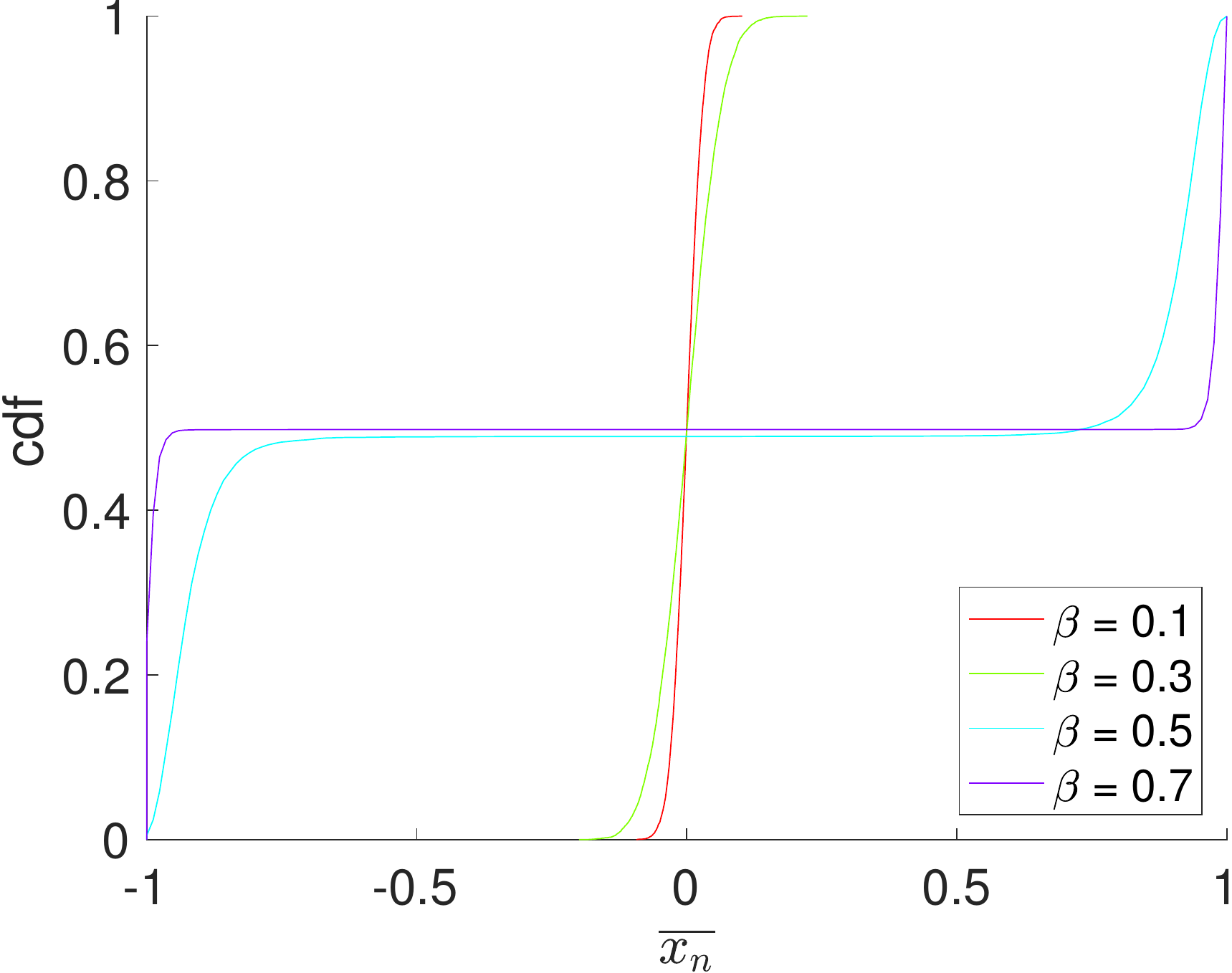}  
  \vspace*{-0.2cm}
  \caption{Lattice graph: cdf of $\overline{X_n}$ for $n = 2601$, $\beta = \{0.1,0.3,0.5,0.7\}$, $h=0$, $S=0$, $p=0.3$.}
  \label{fig:cdf_lattice}
  \vspace*{-0.3cm}
\end{figure}

\begin{figure}[ht]
  \centering
  \includegraphics[width=0.85\linewidth]{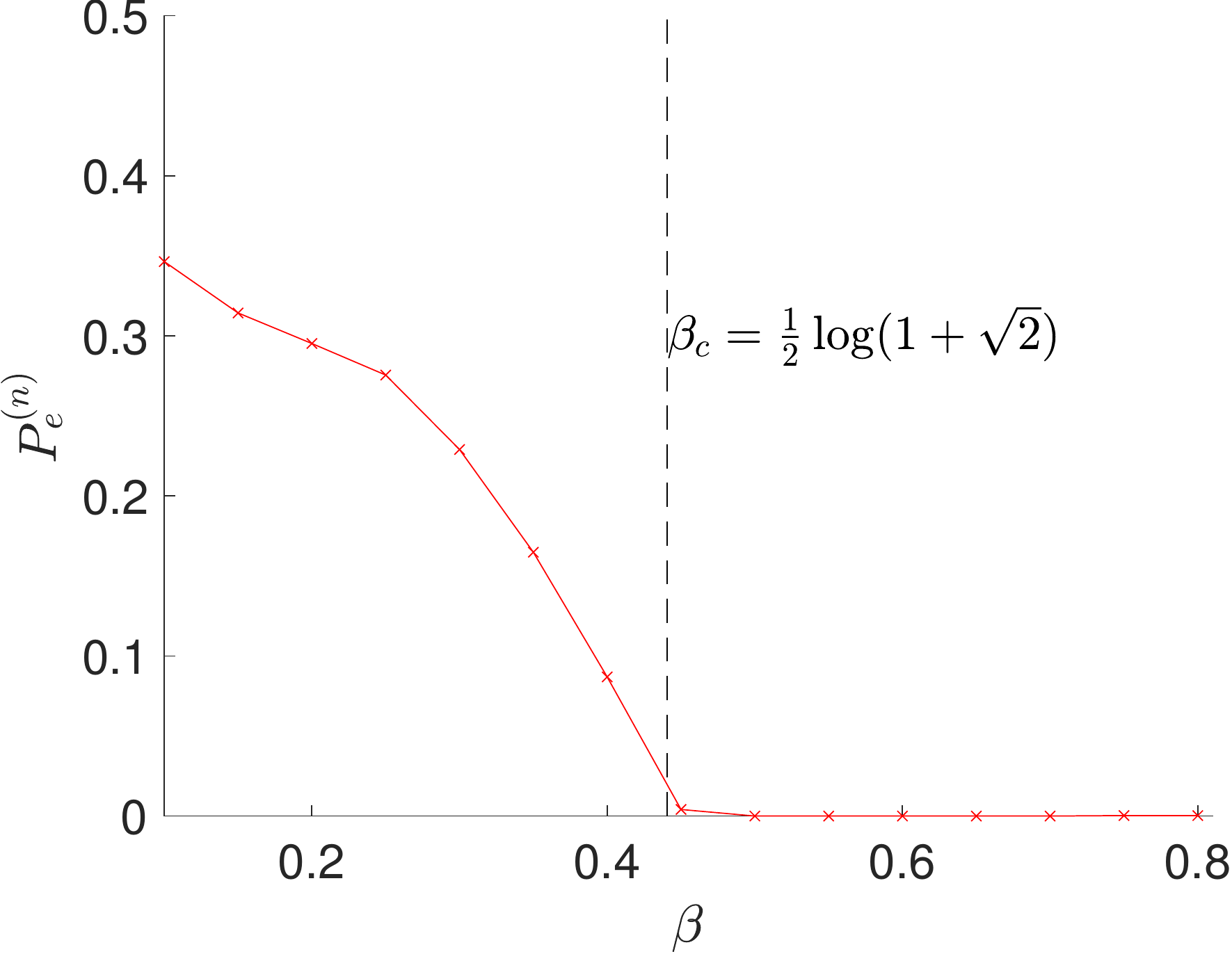}
  \vspace*{-0.3cm}
  \caption{Lattice graph: $P^{(n)}_e$ versus $\beta$ for $n = 2601$, $h=0$, $S=0$, $p=0.3$.}
  \label{fig:error_beta_lattice}
  \vspace*{-0.6cm}
\end{figure}

\section{Conclusion}
\label{sec:conclusion}
In this paper, we analyzed the asymptotic accuracy of supermajority sentiment detection in social networks with an external influence. We related the detection accuracy to the asymptotic distribution of the average member sentiments in the network. We showed that in several graphs such as Empty graph, Chain graph, Complete graph, Lattice graph, when the average member sentiment stays away from the supermajority threshold level, the detection is asymptotically accurate; otherwise, the detection is  inaccurate. 

\bibliographystyle{unsrt}
\bibliography{reflist}

\end{document}